\theoremstyle{plain}
\newtheorem{theorem}{Theorem}
\begin{document}

\title{Out-of-time-order correlator computation based on discrete truncated Wigner approximation}

\author{Tatsuhiko Shirai}
\email{tatsuhiko.shirai@aoni.waseda.jp}
\affiliation{Waseda Institute for Advanced Study, Waseda University, Nishi Waseda, Shinjuku-ku, Tokyo 169-0051, Japan}

\author{Takashi Mori}
\affiliation{Department of Physics, Keio University, Kohoku-ku, Yokohama, Kanagawa 223-8522, Japan}

\date{\today}

\begin{abstract}
    We propose a method based on the discrete truncated Wigner approximation (DTWA) for computing out-of-time-order correlators.
    This method is applied to long-range interacting quantum spin systems where the interactions decay as a power law with distance.
    As a demonstration, we use a squared commutator of local operators and its higher-order extensions that describe quantum information scrambling under Hamilton dynamics.
    Our results reveal that the DTWA method accurately reproduces the exact dynamics of the average spreading of quantum information (i.e., the squared commutator) across all time regimes in strongly long-range interacting systems.
    We also identify limitations in the DTWA method when capturing dynamics in weakly long-range interacting systems and the fastest spreading of quantum information.
    Then we apply the DTWA method to investigate the system-size dependence of the scrambling time in strongly long-range interacting systems.
    We reveal that the scaling behavior of the scrambling time for large system sizes qualitatively changes depending on the interaction range.
    This work provides and demonstrates a new technique to study scrambling dynamics in long-range interacting quantum spin systems.
\end{abstract}

\maketitle

\section{Introduction}
Out-of-time-order correlator (OTOC)~\cite{larkin1969quasiclassical} has attracted attention in non-equilibrium statistical mechanics, quantum information, and quantum gravity~\cite{swingle2018unscrambling}.
The time evolution of the OTOC estimates the scrambling time $t_*$ when local perturbation propagates to the entire system.
Systems that exhibit logarithmic scrambling time, $t_* \sim \beta \hbar \log N$, where $\beta$ is the inverse temperature, $\hbar$ is the Planck constant, and $N$ is the number of degrees of freedom, are referred to as fast scramblers~\cite{maldacena2016bound,bentsen2019fast}, with a holographic duality to black hole being explored~\cite{shenker2014black}.
Meanwhile, ballistic information propagation in a chaotic spin system with short-range interaction~\cite{lieb1972finite} implies a polynomial scrambling time, $t_* \sim N^{1/d}$, where $d$ is the space dimension.
Coherently simulating a reverse time evolution has facilitated the measurements of the OTOC in trapped ions~\cite{garttner2017measuring,tian2022testing}, nuclear magnetic resonance systems~\cite{li2017measuring}, and superconducting circuits~\cite{braumuller2022probing}.


Long-range interacting quantum spin systems, where the interaction decays as a power law $J \sim r^{-\alpha}$ with distance $r$, exhibit intriguing dynamical properties~\cite{defenu2023long}.
Rigorous bounds on the scrambling time indicate that translationally invariant spin chains with extensive energy are not fast scramblers, i.e. $t_* \sim N^\gamma$~\cite{tran2020hierarchy,yin2020bound,kuwahara2021absence}.
Although the scrambling dynamics were numerically investigated in previous works~\cite{colmenarez2020lieb, richter2023transport, qi2023surprises}, accurately estimating the exponent $\gamma$ is challenging due to the exponentially growing Hilbert space and the strong finite-size effects.
Therefore, developing approximated methods to tackle this problem is crucial.


Existing approximated methods are insufficient to characterize scrambling dynamics.
A classical approach (as defined in Sec.~\ref{secA:result}) completely neglects the effects of quantum fluctuations on temporal correlation functions.
Truncated Wigner approximation~\cite{polkovnikov2010phase}, which relies on a continuous Wigner function defined over a continuous phase space, fails to capture the spatial structure of correlations.
Since scrambling dynamics is intrinsically tied to the propagation of local correlations throughout the system, a method capable of resolving the dynamics at the level of individual spins is necessary.

One promising approach to address these problems is the discrete truncated Wigner approximation (DTWA)~\cite{schachenmayer2015many-body}.
DTWA offers a semiclassical method that enables efficient simulation of quantum dynamics by representing quantum states in a discrete phase space~\cite{william1987wigner}.
DTWA has been shown to accurately reproduce collective observables, spatial correlation functions, and relative entropy in the long-range interacting systems~\cite{schachenmayer2015many-body,mori2019prethermalization,kunimi2021performance}.
However, DTWA-based approaches for computing time-correlation functions have not yet been developed. 

In this study, we propose a DTWA-based method for computing temporal correlation functions including OTOCs.
The method is applied to calculate a squared commutator of local operators, along with its higher-order extensions, to describe quantum information spreading under the Hamilton dynamics.
We benchmark the DTWA method across systems with varying values of $\alpha$, covering from weakly $(\alpha > d)$ to strongly $(\alpha \leq d)$ long-range interactions.
In the strongly long-range interacting regime, unlike the weakly long-range case, the system exhibits nonadditivity in its energy. 
Our numerical results reproduce the exact dynamics for the average spreading of quantum information (i.e., squared commutators) across all time regimes in strongly long-range interacting systems.
Furthermore, we identify the limitations of the DTWA method when simulating scrambling dynamics in weakly long-range interacting systems and the fastest spreading of quantum information.
In addition, we observe that the approximated dynamics of an autocorrelation function holds valid over short-time scales, with these timescales being shorter as $\alpha$ increases.
These findings clarify the applicability range of the DTWA method in exploring scrambling dynamics in long-range interacting quantum spin systems.
Finally, we apply the DTWA method to investigate the system-size dependence of the scrambling time in strongly long-range interacting regime and compare the numerical result with a theoretical lower bound. 
We reveal that the scaling behavior of the scrambling time for large system sizes qualitatively changes depending on the value of $\alpha$.

This paper is organized as follows. 
In Sec.~\ref{sec:model_method}, we present a model for long-range interacting systems, the DTWA method for computing time-correlation functions including OTOCs, and its benchmark results.
In Sec.~\ref{sec:result}, we provide a theoretical bound and the numerical results using the DTWA approach for the scrambling time.
Section~\ref{sec:conclusion} summarizes this paper with some future directions.
Appendix provides the derivation of the DTWA expression for OTOCs, efficient exact simulation method at $\alpha=0$, and system-size dependences of the DTWA method.

\section{Model and Methods}\label{sec:model_method}
\subsection{Long-range interacting quantum spin systems}
We consider a quantum spin system on a lattice $\Lambda \in \{1, \ldots, N\}$.
The Hamiltonian is given by
\begin{equation}
\hat{H}=\sum_{\substack{i,j \in \Lambda\\ (j>i)}} \hat{\bm{\sigma}}_i \bm{J}_{ij} \hat{\bm{\sigma}}_j^\intercal + \sum_{i\in \Lambda} \bm{h} \hat{\bm{\sigma}}_i^\intercal,
\label{eq:model}
\end{equation}
where $\bm{\sigma}_i=(\hat{\sigma}_i^x, \hat{\sigma}_i^y, \hat{\sigma}_i^z)$ is the vector for Pauli spin operators acting on site $i \in \Lambda$ and $\intercal$ denotes the transpose.
$\bm{J}_{ij} = \{ J_{ij}^{ab} \}_{a,b\in \{x,y,z\}}$ represents the interaction matrix between spins $i$ and $j$, with $J_{ij}^{ab} = J_{ji}^{ba}$, and $\bm{h}=(h^x,h^y,h^z)$ describes a uniform magnetic field, respectively.
The interaction strength decays as a power of $\alpha$ with distance $r_{ij}$, given as
\begin{equation}
    J_{ij}^{ab}=\frac{J^{ab}}{\mathcal{N}(\alpha)}r_{ij}^{-\alpha}.
    \label{eq:coupling}
\end{equation}
Although the proposed DTWA method is applicable regardless of boundary conditions and lattice topologies, we herein assume a one-dimensional lattice with a periodic boundary condition.
Then $r_{ij} = \min \{|i-j|, N-|i-j|\}$.
The interaction is called strongly long range when $0\leq\alpha \leq 1$, whereas weakly long range when $\alpha > 1$~\cite{defenu2023long}.
The normalization of $\mathcal{N}(\alpha)=\sum_{i=2}^N r_{1i}^{-\alpha}$ is known as the Kac prescription~\cite{kac1963vanderWaals} so that the energy per spin is finite in the thermodynamic limit even at $0\leq \alpha \leq 1$.
The model is reduced to an infinite-range model at $\alpha=0$, whereas a short-range model with nearest-neighbor interaction at $\alpha \to \infty$.
In numerical simulations, we adopt $J^{ab}=\delta_{az}\delta_{bz}$ and $(h^x,h^y,h^z)=(0.9045, 0, 0.809)$, where the eigenstate thermalization hypothesis was numerically shown in the limit of $\alpha \to \infty$~\cite{kim2014testing}.
Here, $\delta_{ab}$ is the Kronecker's delta.


\subsection{DTWA method to the OTOC}
We briefly review the DTWA method for computing $\sigma_i^a(t)=\mathrm{Tr}(\hat{\sigma}_i^a(t) \rho)  \eqqcolon \langle \hat{\sigma}_i^a(t) \rangle$~\cite{schachenmayer2015many-body}, where $i\in \Lambda$, $a \in \{x,y,z\}$, $\hat{\sigma}_i^{a} (t)=e^{\mathrm{i}\hat{H}t} \hat{\sigma}_i^{a} e^{-\mathrm{i}\hat{H}t}$, and $\rho$ is the density matrix of the system.
Here we take $\hbar=1$.
Let us introduce the phase-point operator $\hat{A} (\bm{s}_{\bm{\tau}})$:
\begin{equation}
    \hat{A}(\bm{s}_{\bm{\tau}}) = \prod_{k \in \Lambda} \left[ \frac{1}{2} \left( 1+ \bm{s}_{\tau_k} \hat{\bm{\sigma}}_k^\intercal \right) \right],
\end{equation}
where $\bm{s}_{\bm{\tau}}=(\bm{s}_{\tau_1},\ldots,\bm{s}_{\tau_N})$ and $\bm{\tau}=(\tau_1,\ldots,\tau_N)$ with $\tau_k \in \{(0,0),(0,1),(1,0),(1,1) \}$ denotes the points in the discrete phase space by $\bm{s}_{(0,0)}=(1,1,1)$, $\bm{s}_{(0,1)}=(-1,-1,1)$, $\bm{s}_{(1,0)}=(1,-1,-1)$, and $\bm{s}_{(1,1)}=(-1,1,-1)$~\cite{william1987wigner}.
The discrete Wigner function for $\rho$ is given by
\begin{equation}
    W_{\bm{\tau}} = \frac{1}{2^N}\langle \hat{A}(\bm{s}_{\bm{\tau}}) \rangle.
\end{equation}
Then DTWA approximates $\sigma_i^a(t)$ as
\begin{align}
    \sigma_i^a (t) =&\sum_{\bm{\tau}} W_{\bm{\tau}} \mathrm{Tr} (\hat{\sigma}_i^a(t) \hat{A}(\bm{s}_{\bm{\tau}})) \nonumber\\
    =& \sum_{\bm{\tau}} W_{\bm{\tau}} \mathrm{Tr} (\hat{\sigma}_i^a e^{-i \hat{H}t} \hat{A}(\bm{s}_{\bm{\tau}}) e^{i \hat{H}t}) \nonumber\\
    \approx& \sum_{\bm{\tau}} W_{\bm{\tau}} \mathrm{Tr} [\hat{\sigma}_i^a \hat{A}(\bm{s}(t; {\bm{\tau}}))]=\sum_{\bm{\tau}} W_{\bm{\tau}} s_{i}^a(t;{\bm{\tau}}),
\end{align}
where $\bm{s}(t;{\bm{\tau}})=(\bm{s}_1(t;{\bm{\tau}}),\ldots,\bm{s}_N(t;{\bm{\tau}}))$ are the solutions of the following classical equations of motions at time $t$:
\begin{equation}
\frac{d}{dt} \bm{s}_k(t;{\bm{\tau}})= -2 \bm{s}_k(t;{\bm{\tau}}) \times \left( \bm{h} + \sum_{\ell(\neq k)} \bm{s}_\ell(t;{\bm{\tau}})  \bm{J}_{k\ell}^\intercal \right),
\label{eq:classical}
\end{equation}
with initial conditions $\bm{s}(0;\bm{\tau})=\bm{s}_{\bm{\tau}}$.
Here, $\times$ denotes the cross product.

In this work, we focus on the $n$-th order time-correlation function, defined as
\begin{equation}
    F_{ij}^{(n)ab}(t) = \langle (\hat{\sigma}_i^{a} (t) \hat{\sigma}_j^{b}(0))^n \rangle,
    \label{eq:OTOC_main}
\end{equation}
where $a, b \in \{x,y,z\}$ and $i,j \in \Lambda$.
A $k$-point correlation function $\braket{\hat{A}_1(t_1)\ldots \hat{A_k}(t_k)}$ is referred to as (anti-)time ordered when $t_1 \geq \ldots \geq t_k$ $(t_1 \leq \ldots \leq t_k)$ and as out-of-time ordered otherwise.
For $n\geq 2$, the correlator includes an out-of-time-ordered sequence, such as $\braket{\hat{\sigma}_1^x(t)\hat{\sigma}_2^y(0)\hat{\sigma}_1^x(t)\hat{\sigma}_2^y(0)}$ for $n=2$.
The case of $n\geq 3$, called higher-order OTOCs, has also been studied in~\cite{haehl2017schwinger,roberts2017chaos,tsuji2018out} along with the conventional $n=2$ case.
Although there exist other definitions of the OTOC (called a regularized OTOC~\cite{maldacena2016bound} or a bipartite OTOC~\cite{tsuji2018out}), this study uses the statistical average provided by $\braket{\cdot}=\mathrm{Tr}[\cdot \rho]$. 

We provide the DTWA expression for the $n$-th order time-correlation function (see the derivation for Appendix~\ref{appendix:DTWA}): for odd $n$
\begin{equation}
    F_{ij}^{(n)ab}(t)\approx \sum_{\bm{\tau}} W_{\bm{\tau}} [s_{\tau_j}^b s_{i,1}^{(n)a}(\bm{\tau}) + \mathrm{i} (s_{i,2}^{(n)a}(\bm{\tau}) - s_{i,3}^{(n)a}(\bm{\tau})) ],
    \label{eq:odd_OTOC}
\end{equation}
and for even $n$
\begin{equation}
    F_{ij}^{(n)ab}(t)\approx \sum_{\bm{\tau}} W_{\bm{\tau}} [s_{\tau_j}^b s_{j,1}^{(n)b}(\bm{\tau}) + \mathrm{i} (s_{j,2}^{(n)b}(\bm{\tau}) - s_{j,3}^{(n)b}(\bm{\tau})) ].
    \label{eq:even_OTOC}
\end{equation}
Here, $\bm{s}_{m}^{(\ell)}(\bm{\tau})=(\bm{s}_{1,m}^{(\ell)}(\bm{\tau}),\ldots,\bm{s}_{N,m}^{(\ell)}(\bm{\tau}))$ is obtained by $\bm{s}_{m}^{(\ell-1)}(\bm{\tau})$ for $m \in \{1,2,3\}$ and $\ell \in \{1,\ldots, n\}$: for odd $\ell$,
\begin{equation}
    \left\{
    \begin{aligned}
        &\bm{s}_{k,m}^{(\ell)}(\bm{\tau}) = \bm{s}_{k,m}^{(\ell-1)} (t;\bm{\tau}) \text{ for } k \neq i\\
        &\bm{s}_{i,m}^{(\ell)}(\bm{\tau}) = -\bm{s}_{i,m}^{(\ell-1)} (t;\bm{\tau}) + 2s_{i,m}^{(\ell-1)a}(t;\bm{\tau}) \mathbf{e}_a,
    \end{aligned}
    \right.
\end{equation}
and for even $\ell$,
\begin{equation}
    \left\{
    \begin{aligned}
        &\bm{s}_{k,m}^{(\ell)}(\bm{\tau}) = \bm{s}_{k,m}^{(\ell-1)} (-t;\bm{\tau}) \text{ for } k \neq j\\
        &\bm{s}_{j,m}^{(\ell)}(\bm{\tau}) = -\bm{s}_{j,m}^{(\ell-1)} (-t;\bm{\tau}) + 2s_{j,m}^{(\ell-1)b}(-t;\bm{\tau}) \mathbf{e}_b,
    \end{aligned}
    \right.
\end{equation}
where $\mathbf{e}_a$ is the unit vector along $a$-axis and $\bm{s}_{m}^{(\ell-1)}(\pm t;\bm{\tau}) = (\bm{s}_{1,m}^{(\ell-1)}(\pm t;\bm{\tau}),\ldots,\bm{s}_{N,m}^{(\ell-1)}(\pm t;\bm{\tau}))$ are the solutions of classical equations of motions in Eq.~(\ref{eq:classical}) at time $\pm t$ with initial conditions $\bm{s}_{m}(0;\bm{\tau})=\bm{s}_{m}^{(\ell-1)}(\bm{\tau})$.
Finally, $\bm{s}_{m}^{(0)}(\bm{\tau})$ are given as
\begin{align}
    &\bm{s}_{k,m}^{(0)}(\bm{\tau})=\bm{s}_{\tau_k} \text{ for } k\neq j \text{ and } m \in \{1,2,3\}, \nonumber\\
    &\left\{
    \begin{aligned}
        &\bm{s}_{j,1}^{(0)}(\bm{\tau})=s_{\tau_j}^b \mathbf{e}_b\\
        &\bm{s}_{j,2}^{(0)}(\bm{\tau})=\mathbf{e}_b \times \bm{s}_{\tau_j}\\
        &\bm{s}_{j,3}^{(0)}(\bm{\tau})=\bm{0}.
    \end{aligned}
    \right.
\end{align}

We adopt two types of the density matrix: all-down spin state $\rho_\downarrow = \ket{0^N}\bra{0^N}$ and infinite-temperature state $\rho_0 = \hat{I}/2^N$.
Here, $\ket{0^N}=\prod_{k \in \Lambda} \ket{0_k}$, where $\hat{\sigma}_k^z \ket{0_k}=-\ket{0_k}$, and $\hat{I}$ is the identity operator.
Then, the discrete Wigner functions are given as
\begin{align}
    W_{\bm{\tau}}=& \prod_{k \in \Lambda} \Big[\frac{1}{2} \left( \delta_{\tau_k (1,0)} + \delta_{\tau_k (1,1)} \right) \Big] \text{ for } \rho=\rho_\downarrow, \nonumber\\
    W_{\bm{\tau}}=& \prod_{k \in \Lambda} \Big[ \frac{1}{4} ( \delta_{\tau_k (0,0)} + \delta_{\tau_k (0,1)} \nonumber\\
    & \qquad +\delta_{\tau_k (1,0)} + \delta_{\tau_k (1,1)} ) \Big] \text{ for } \rho=\rho_0,
\end{align}
respectively.
Since the dimension of the discrete phase space grows exponentially with $N$, we approximate Eqs.~(\ref{eq:odd_OTOC}) and~(\ref{eq:even_OTOC}) by sampling $\bm{\tau}$ according to the probability of $W_{\bm{\tau}}$.
We set the number of samples to $100$.
We performed computations five times for each parameter set in Sec.~\ref{secA:result} to benchmark the method, and obtained the mean and standard deviation from these five instances.
Since the standard deviation is consistently small, we carried out a single simulation per parameter set  when estimating the scrambling time in Sec.~\ref{secB:result}.

In the following section, we compare the DTWA-based approximated dynamics with exact dynamics.
We use the permutation symmetry of the system Hamiltonian at $\alpha=0$ to compute the exact dynamics of $F_{ij}^{(1)ab}(t)$ with both $\rho=\rho_\downarrow$ and $\rho=\rho_0$, and $F_{ij}^{(2)ab}(t)$ with $\rho=\rho_0$.
Since the dimension of the permutation-symmetry subspace scales as $O(N^3)$~\cite{sarkar1987optical, gegg2016efficient}, we can perform large-scale simulation with system sizes up to $N=80$ (see Appendix~\ref{appendix:permutation} for the method).
For other cases, we simulate the quantum dynamics within the Hilbert space of $N$ spins.
When $\rho=\rho_\downarrow$, we compute $F_{ij}^{(n)ab}(t) = \bra{0^N} (\hat{\sigma}_i^{a} (t) \hat{\sigma}_j^{b})^n \ket{0^N}$, which can be obtained by solving the Schr{\"o}dinger equation, i.e. $\pm \mathrm{i} d \ket{\phi}/dt= \hat{H} \ket{\phi}$, where $+$ and $-$ signs represent the forward and backward time evolutions, respectively.
When $\rho=\rho_0$, we compute $F_{ij}^{(n)ab}(t)$ by using a random state $\ket{\phi_m}$, which is uniformly sampled from the Haar state:
\begin{equation}
    F_{ij}^{(n)ab}(t) \approx \frac{1}{M} \sum_{m=1}^M \bra{\phi_m} (\hat{\sigma}_i^{a} (t) \hat{\sigma}_j^{b})^n \ket{\phi_m}.
\end{equation}
$M$ is the number of the samples, and set to $100$.

The DTWA method needs a computational cost of the order of $N^2$ for evaluating $F_{ij}^{(n)ab}(t)$, whereas exact methods generally require computational costs that scale exponentially with $N$.
Thus, the DTWA offers a promising approach for exploring dynamical features of large-sized systems even when $\alpha > 0$.

\subsection{Benchmark results}\label{secA:result}
\begin{figure}[t]
    \centering
    (a)\\
    \includegraphics[width=0.65\linewidth]{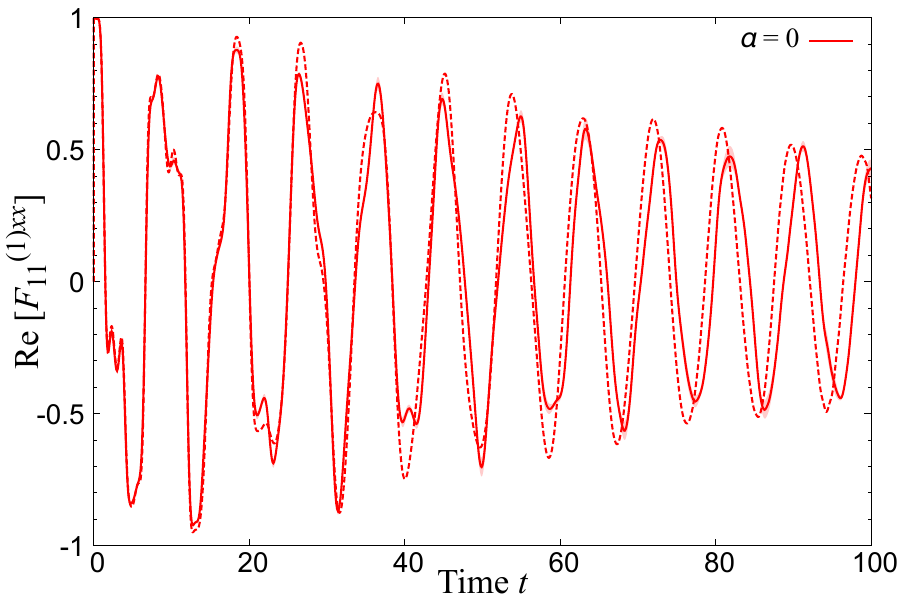}\\
    (b)\\
    \includegraphics[width=0.65\linewidth]{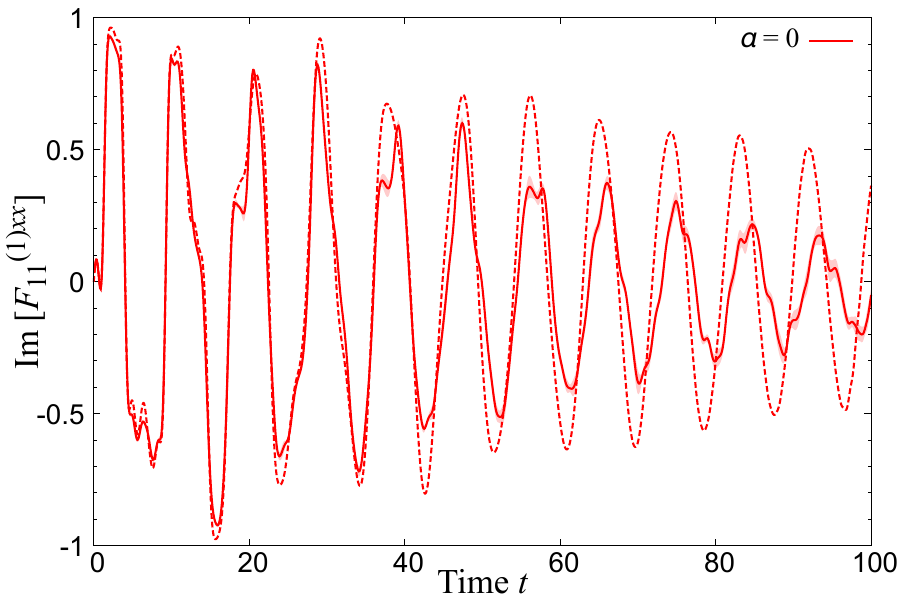}\\
    (c)\\   
    \includegraphics[width=0.65\linewidth]{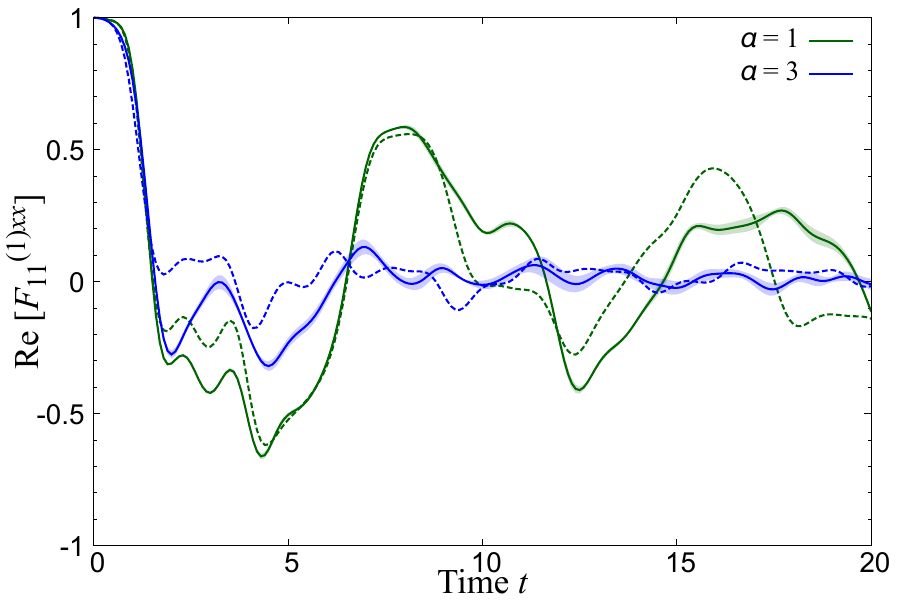}\\
    (d)\\
    \includegraphics[width=0.65\linewidth]{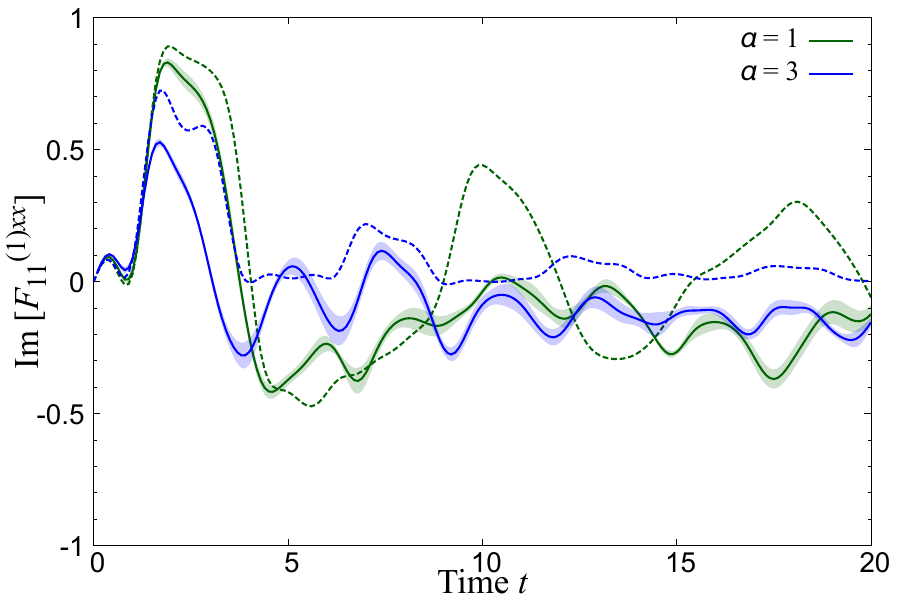}
    \caption{The real part and the imaginary part of autocorrelation functions.
    The DTWA results and exact results are drawn by bold lines and dotted lines, respectively.
    (a) and (b) show the data at $\alpha=0$ with $N=50$, whereas (c) and (d) show the data at $\alpha=1$ (green) and $\alpha=3$ (blue) with $N=15$.
    The shaded regions represent the standard deviation.
    }
    \label{fig:autocorrelation}
\end{figure}
We consider the following quantities: 
\begin{align}
    F_{ii}^{(1)aa}(t)=&\langle \hat{\sigma}_i^a(t)\hat{\sigma}_i^a \rangle, \nonumber\\
    C_{ij}^{(n)ab}(t) =& \frac{1}{2} \lVert [\hat{\sigma}_i^a(t),\hat{\sigma}_j^b] \rVert_{2n},
\end{align}
where $i,j \in \Lambda$, $a,b\in \{x,y,z\}$, $[\cdot,\cdot]$ denotes the commutator, and $\lVert \cdot \rVert_{2n}$ is the scaled Schatten $2n$-norms, defined by $\lVert \hat{O} \rVert_{2n}=(\langle (\hat{O}^\dagger \hat{O})^n \rangle)^{1/2n}$. 
The first quantity is the autocorrelation function, which plays a pivotal role in linear-response theory~\cite{kubo2012statistical} and Krylov complexity~\cite{parker2019universal}. 
The second quantity describes quantum information scrambling.
The Frobenius norm corresponding to $n=1$ and the operator norm corresponding to $n\to \infty$ were studied in previous works~\cite{tran2020hierarchy,colmenarez2020lieb,kuwahara2021absence}.
The Frobenius and operator norms represent the average and the fastest spreading of quantum information, respectively, and yield different lower bounds on $\gamma$~\cite{tran2020hierarchy,kuwahara2021absence}.

The scaled Schatten norm is related to the OTOC as
\begin{equation}
    C_{ij}^{(n)ab}(t)=\frac{1}{2}\left[\binom{2n}{n} \left(1+2\sum_{k=1}^n \frac{(-1)^k\binom{n}{k}}{\binom{n+k}{k}} F_{ij}^{(2k)ab}(t)\right)\right]^{\frac{1}{2n}},
    \label{eq:shatten}
\end{equation}
which is obtained by the binomial expansion of $(\hat{A}-\hat{B})^{2n}$, where $\hat{A}= \hat{\sigma}_i^a(t) \hat{\sigma}_j^b$ and $\hat{B}= \hat{\sigma}_j^b \hat{\sigma}_i^a(t)$, with properties of $\hat{A}\hat{B}=\hat{B}\hat{A}=\hat{I}$.
Eq.~(\ref{eq:shatten}) enables to study scrambling dynamics through the scaled Schatten norm.
For example, $n=1$ provides a relation involving the conventional OTOC:
\begin{equation}
    C_{ij}^{(1)ab}(t)=\sqrt{\frac{1- F_{ij}^{(2)ab}(t)}{2}}.
\end{equation}
Furthermore, the monotonicity property
\begin{equation}
    C_{ij}^{(n)ab}(t) \leq C_{ij}^{(n+1)ab}(t) \text{ for any } \rho,
    \label{eq:monotonicity}
\end{equation}
indicates that a larger $n$ gives a tighter lower bound for the operator norm.

\begin{figure}[t]
    \centering
    (a)\\
    \includegraphics[width=0.75\linewidth]{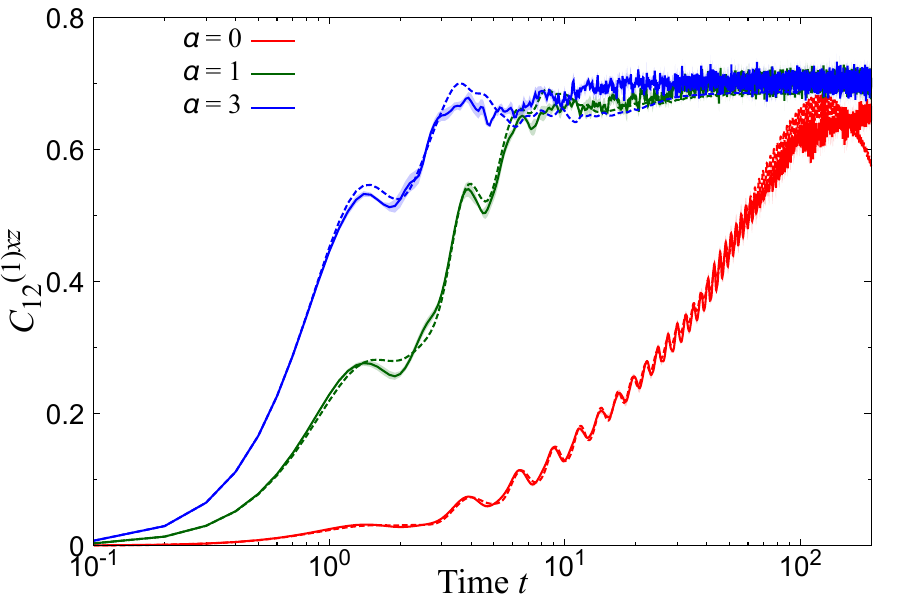}\\
    (b)\\
    \includegraphics[width=0.75\linewidth]{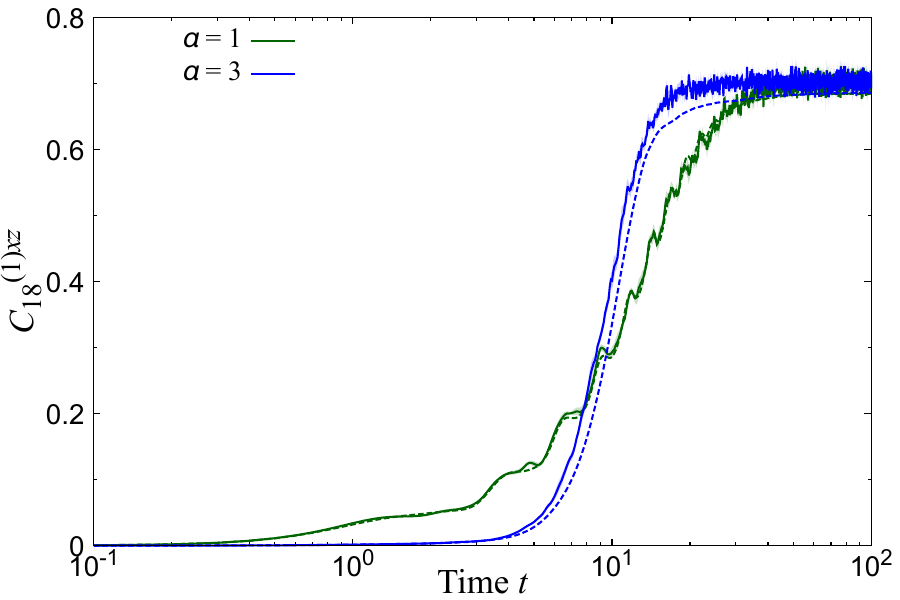}\\
    (c)\\
    \includegraphics[width=0.75\linewidth]{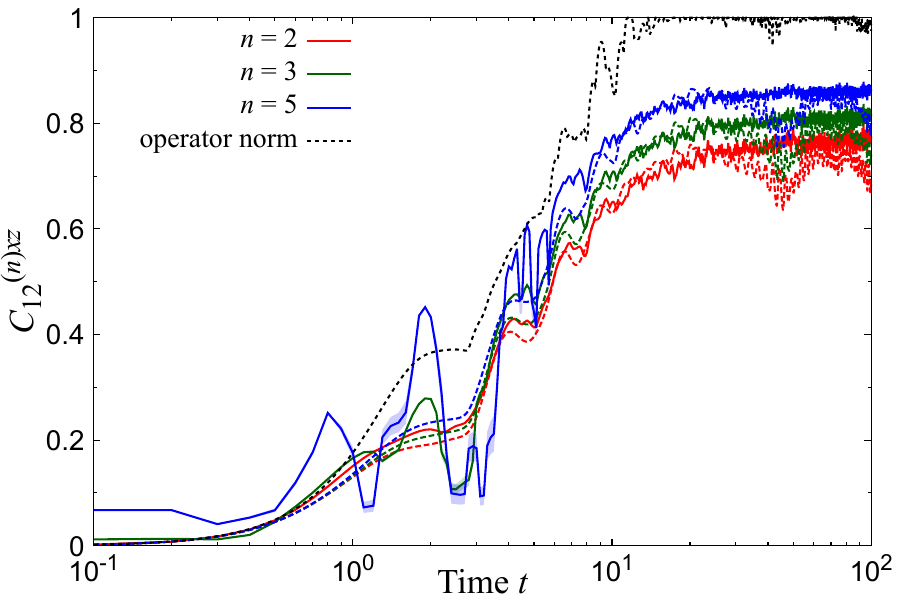}
    \caption{
    The OTOC results.
    The DTWA results and exact results are drawn by bold lines and dotted lines, respectively.
    (a) $C_{12}^{(1)xz}(t)$ and (b) $C_{1\lceil (N+1)/2 \rceil}^{(1)xz}(t)$ with $\rho=\rho_0$ are depicted for various values of $\alpha$: $\alpha=0$ and $N=50$ (red), $\alpha=1$ and $N=15$ (green), and $\alpha=3$ and $N=15$ (blue). 
    (c) $C_{12}^{(n)xz}(t)$ with $\rho=\rho_0$ at $n=2$ (red), $n=3$ (green), and $n=5$ (blue) at $\alpha=0$ and $N=10$ are depicted.
    The black dotted line represents the operator norm.
    The shaded regions represent the standard deviation.
    }
    \label{fig:otoc}
\end{figure}

Figure~\ref{fig:autocorrelation} depicts the real and imaginary parts of the autocorrelation function, $F_{11}^{(1)xx}(t)$, with the density matrix of $\rho=\rho_\downarrow$ for various values of $\alpha$ (see Appendix~\ref{appendix:size} for the system-size dependences of the results).
We find that the DTWA accurately reproduces the multiple oscillations of the exact dynamics for both real and imaginary parts at $\alpha=0$.
For all the cases, the DTWA captures the initial stage of the exact dynamics, but deviations from the exact dynamics appear earlier as $\alpha$ increases.
It is noted that a classical approach, where quantum fluctuations are entirely neglected and each spin has a definitive value in all directions, fails to capture even the initial dynamics.
In the case of $\rho=\rho_\downarrow$, the dynamics is described by a single trajectory $\bm{s}(t)=(\bm{s}_1(t),\ldots,\bm{s}_N(t))$, which obeys the classical equations of motions in Eq.~(\ref{eq:classical}) with initial conditions $(s_k^x(0), s_k^y(0), s_k^z(0))=(0,0,-1)$ for $k \in \Lambda$.
Then, we obtain
\begin{equation}
    F_{11}^{(1)xx}(t)
    \approx s_1^x (t) s_1^x(0) = 0.
\end{equation}
For a general $\rho$, the dynamics is described by an ensemble of trajectories.
However, simply taking an ensemble average of $s_1^x (t) s_1^x(0)$ always yields a real value, and thus fails to capture the imaginary part of the autocorrelation function, even though it is generally finite.
This result clearly shows that the DTWA method, which accounts for the quantum fluctuations in $\rho$, outperforms this classical approach.

Figures~\ref{fig:otoc}~(a) and~(b) illustrate the time evolution of the squared commutators, $C_{12}^{(1)xz}(t)$ and $C_{1\lceil (N+1)/2 \rceil}^{(1)xz}(t)$ with $\rho=\rho_0$, for various values of $\alpha$, where $\lceil \cdot \rceil$ is the ceiling function.
We do not depict $C_{1\lceil (N+1)/2 \rceil}^{(1)xz}(t)$ for $\alpha=0$, since $C_{12}^{(1)xz}(t)=C_{1\lceil (N+1)/2 \rceil}^{(1)xz}(t)$ in this case.
For all the cases, the squared commutators start at zero, increase with time, and saturate at late times.
The DTWA quantitatively reproduces the exact dynamics of $C_{12}^{(1)xz}(t)$ regardless of the value of $\alpha$.
While the DTWA accurately captures the dynamics of $C_{1\lceil (N+1)/2 \rceil}^{(1)xz}(t)$ at $\alpha=1$, it shows a noticeable deviation at $\alpha=3$.
These results indicate the validity of the DTWA method for estimating the average spreading of quantum information in the strongly long-range interacting systems (i.e., $0\leq \alpha \leq 1$).

To understand the limitation of the DTWA method, we examine a local field $\bm{\hat{\Gamma}}_i$, defined as
\begin{equation}
    \bm{\hat{\Gamma}}_i = \left( \sum_{j (\neq i)} \hat{\bm{\sigma}}_j \bm{J}_{ji} \right) + \bm{h}.
\end{equation}
In the DTWA method, this local field is approximated by a classical field.
For this approximation to be valid, the fluctuation of the local field, defined as $\langle \braket{(\hat{\Gamma}_i^a)^2}\rangle \coloneqq \braket{(\hat{\Gamma}_i^{a}-\braket{\hat{\Gamma}_i^{a}})^2}$, should be small.
This fluctuation can be roughly estimated as
\begin{align}
    \langle\braket{(\hat{\Gamma}_i^a)^2}\rangle &\sim \frac{1}{\mathcal{N}(\alpha)^{2}} \sum_{j (\neq i)} \sum_b \frac{\langle\braket{(\hat{\sigma}_j^b)^2}\rangle}{r_{ij}^{2\alpha}},
\end{align}
which scales as for large $N$
\begin{equation}
    \langle\braket{(\hat{\Gamma}_i^a)^2}\rangle \sim\left\{
    \begin{aligned}
        N^{-1} \quad &\text{ for $0\leq \alpha < 1/2$},\\
        N^{-1} \log N \quad &\text{ for $\alpha = 1/2$},\\
        N^{-2(1-\alpha)} \quad &\text{ for $1/2 < \alpha < 1$},\\
        (\log N)^{-2} \quad &\text{ for $\alpha = 1$},\\
        N^0 \quad &\text{ for $\alpha > 1$}.
    \end{aligned}
    \right.
\end{equation}
This scaling indicates that the fluctuation vanishes in the thermodynamic limit for $0\leq \alpha \leq 1$, thus supporting the validity of the DTWA method in this regime. 
In contrast, for $\alpha > 1$, the fluctuation remains to finite, leading to deviations from exact calculation.

Figure~\ref{fig:otoc}~(c) presents the time-evolution of $C_{12}^{(n)xz}(t)$ with $\rho=\rho_0$ for higher order $n \in \{2,3,5\}$ at $\alpha=0$.
The DTWA captures the saturated value of $C_{12}^{(n)xz}(t)$ at late times, but fails to describe the transient dynamics.
The deviations increase with the order $n$.
We find that $C_{12}^{(5)xz}(t)$ exhibits large fluctuations~\footnote{When the parenthesis in Eq.~(\ref{eq:shatten}) is negative in numerical simulations, the value of $C_{12}^{(5)ab}(t)$ is set to zero.}, and thus does not even qualitatively align with the exact dynamics.
The large fluctuations at $n=5$ remain for larger system size at $N=50$ (not shown).
Capturing $C_{12}^{(n)xz}(t)$ for large $n$ is difficult, since the parenthesis in Eq.~(\ref{eq:shatten}) at short-time scales is exponentially small with $n$.
The standard deviation of $C_{12}^{(n)xz}(t)$ is small compared to its mean, indicating that these deviations arise from the intrinsic limitations of the DTWA method, rather than insufficient sampling.
Furthermore, since $C_{12}^{(n)xz}(t)$ for a small $n$ significantly differs from $C_{12}^{(\infty)xz}(t)$, developing a method to estimate the fastest spreading of quantum information, represented by the operator norm $C_{ij}^{(\infty)ab}(t)$, remains a challenging open problem.

\section{Scrambling time}\label{sec:result}
\subsection{Theoretical bound}\label{secB:theory}
We provide a theoretical lower bound on the scrambling time $t_*$ in the strongly long-range interacting regime.
We extend the applicability of the argument in Ref.~\cite{yin2020bound} from the case $\alpha=0$ to the range $0\leq \alpha \leq 1$.

Let $\mathcal{H}$ denote the Hilbert space of a spin-$1/2$ model on a lattice $\Lambda$, and let $\mathcal{B}=\otimes_{i\in \Lambda} \mathcal{B}_i$ be the set of Hermite operators acting on $\mathcal{H}$.
For simplicity, we restrict our argument to a one-dimensional lattice, though generalization to higher dimensions is straightforward.
We denote elements of $\mathcal{B}$ as $|\hat{O})$ and define an inner product on $\mathcal{B}$ by
\begin{equation}
    (\hat{O}|\hat{O'})=\mathrm{Tr}(\hat{O}^\dagger \hat{O'} \rho_0).
\end{equation}

We define the average operator size as
\begin{equation}
S(\hat{O})=\sum_{j\in \Lambda} \frac{(\hat{O}|\mathbb{P}_j|\hat{O})}{(\hat{O}|\hat{O})},
\end{equation}
where $\mathbb{P}_j$ is a projection operator, defined as
\begin{equation}
    \mathbb{P}_j |\hat{O})=|\hat{O}) - \frac{1}{2}|\hat{I}_j \otimes \mathrm{Tr}_j \hat{O}),
\end{equation}
with $\mathrm{Tr}_j$ denoting the partial trace over site~$j$ and $\hat{I}_j$ is the identity operator on site~$j$.
This projection extracts the components of Pauli strings with support on site~$j$.

We define the scrambling time $t_*$ as the time when the average operator size exceeds a fraction $a$ of the system size $N$, for some $N$-independent constant $0<a<3/4$:
\begin{equation}
    t_*= \inf_{t\in \mathbb{R}^+} \left\{ \sup_{\hat{O}_i \in \mathcal{B}_i} S(\hat{O}_i(t)) > a N\right\}.
\end{equation}
The operator size is related to the OTOC as
\begin{equation}
    S(\hat{\sigma}_i^\alpha(t))=\frac{1}{4}\sum_{j \in \Lambda} \sum_{b \in \{x,y,z\}} \left(1-F_{ij}^{(2)ab}(t)\right),
\end{equation}
implying that $t_*$ roughly corresponds to the time when the OTOC between distant sites becomes $O(1)$.
We now state the result.
\begin{theorem}
Let $\Lambda \in \{1,\ldots, N\}$ be a one-dimensional lattice, and consider the quantum dynamics on $\mathcal{H}$ governed by Eq.~(\ref{eq:model}), with interaction strength specified by Eq.~(\ref{eq:coupling}).
Then, scrambling time $t_*$ satisfies
\begin{equation}
    t_* \gtrsim \left\{
    \begin{aligned}
    &N^{\frac{1}{2}} \text{ for } 0\leq \alpha < \frac{1}{2}\\
    &N^{\frac{1}{2}} (\log N)^{-\frac{1}{2}} \text{ for } \alpha = \frac{1}{2}\\
    &N^{1-\alpha} \text{ for } \frac{1}{2} < \alpha\leq 1.
    \end{aligned}
    \right.
    \label{eq:theorem}
\end{equation}
\end{theorem}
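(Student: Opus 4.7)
The plan is to extend the operator-size growth argument of Ref.~\cite{yin2020bound} to the Kac-normalized power-law couplings $\bm{J}_{k\ell}$. First, I would differentiate the size functional $S(\hat{O}(t))$ in time via the Heisenberg equation $\partial_t \hat{O}(t) = \mathrm{i}[\hat{H}, \hat{O}(t)]$. The single-site field term $\sum_i \bm{h}\hat{\bm{\sigma}}_i^\intercal$ preserves the Pauli support of every string and therefore commutes with each $\mathbb{P}_j$, so only the two-body pieces $\hat{h}_{k\ell}=\hat{\bm{\sigma}}_k \bm{J}_{k\ell} \hat{\bm{\sigma}}_\ell^\intercal$ can change $S$. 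Moreover, only the ``boundary'' contributions --- those in which exactly one of $\{k,\ell\}$ already lies in the Pauli support of the string being acted on --- actually raise the size.

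Second, I would apply Cauchy--Schwarz in the inner product $(\,\cdot\,|\,\cdot\,)$ after decomposing the relevant commutators via the projector structure $\mathbb{P}_k (I-\mathbb{P}_\ell) + (I-\mathbb{P}_k)\mathbb{P}_\ell$, which isolates the in-support/out-of-support channels. Tracking Pauli-basis orthogonality carefully then yields a differential inequality
\begin{equation}
\frac{d S(\hat{O}(t))}{dt} \,\leq\, c\, \sqrt{S(N-S)}\; \sigma(\alpha,N),
\end{equation}
with $\sigma(\alpha,N)^2 \coloneqq \max_{k\in\Lambda}\sum_{\ell(\neq k)} \sum_{a,b} |J_{k\ell}^{ab}|^2$ and $c>0$ an $N$-independent constant. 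The $\sqrt{S(N-S)}$ factor encodes that at most $S$ amplitudes in the Pauli expansion carry weight at any current support site while at most $N-S$ new sites can be added.

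Third, I would estimate $\sigma(\alpha,N)$ directly from the Kac normalization. Using $|J_{k\ell}^{ab}| \sim r_{k\ell}^{-\alpha}/\mathcal{N}(\alpha)$ with $\mathcal{N}(\alpha)\sim N^{1-\alpha}$ for $\alpha<1$ and $\mathcal{N}(1)\sim \log N$, the sum $\sum_{\ell(\neq k)} r_{k\ell}^{-2\alpha}$ is $\Theta(N^{1-2\alpha})$ for $\alpha<1/2$, $\Theta(\log N)$ at $\alpha=1/2$, and $\Theta(1)$ for $\alpha>1/2$. Combining the two factors gives $\sigma(\alpha,N)\sim N^{-1/2}$ for $0\leq \alpha<1/2$, $\sim N^{-1/2}\sqrt{\log N}$ at $\alpha=1/2$, and $\sim N^{-(1-\alpha)}$ for $1/2<\alpha\leq 1$. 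Finally, since $\int_{1}^{aN} dS/\sqrt{S(N-S)} = \Theta(1)$ for any $N$-independent $a\in(0,3/4)$, integrating the ODE bound gives $t_* \gtrsim 1/\sigma(\alpha,N)$, which reproduces~(\ref{eq:theorem}).

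The main obstacle is the Cauchy--Schwarz step: a naive application produces an $N$ factor in place of the sharp $\sqrt{S(N-S)}$, which would destroy the exponents. Achieving the sharp form requires the projector decomposition above together with the observation that only boundary commutators contribute, so that the sum collapses to an effective two-site quantity weighted by $|J_{k\ell}^{ab}|^2$. Once that structural step is in place, the remainder is a routine bookkeeping exercise on the Kac-normalized sums, organized by whether $2\alpha\lessgtr 1$.
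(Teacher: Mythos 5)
You take a genuinely different route from the paper. The paper does not differentiate $S(\hat{O}_i(t))$ at all: it writes $\sum_{j\neq i}\mathbb{P}_j \leq (N-1)\mathbb{P}_{i^c}$ via AM--GM, uses the Duhamel identity to get $\|\mathbb{P}_{i^c}|\hat{O}_i(t))\|_2 \leq 2t\|\hat{H}_{\mathrm{int}}\|_2$, and then simply evaluates $\|\hat{H}_{\mathrm{int}}\|_2^2=\sum_{j\neq i}\sum_{a,b}(J_{ij}^{ab})^2$ under the Kac normalization --- the same sum you call $\sigma(\alpha,N)^2$, and your case analysis of it is correct. The paper pays the crude factor $N-1$ and still lands on the stated exponents because $S\lesssim 1+Nt^2\|\hat{H}_{\mathrm{int}}\|_2^2$ reaches $aN$ only at $t\sim 1/\|\hat{H}_{\mathrm{int}}\|_2$. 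Your approach, if completed, would actually give a pointwise-in-time bound on $S$ that is tighter than the paper's (the paper itself remarks in Sec.~III.B that its Eq.~(\ref{eq:approximation}) overestimates the operator size by a factor of order $N$), but for the scrambling time both yield the same scaling.

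The gap is exactly where you flag it, and your proposed fix does not close it. The differential inequality $dS/dt \leq c\,\sigma(\alpha,N)\sqrt{S(N-S)}$ is asserted, not derived, and the decomposition $\mathbb{P}_k(I-\mathbb{P}_\ell)+(I-\mathbb{P}_k)\mathbb{P}_\ell$ plus Cauchy--Schwarz does not produce it. Carrying out that step honestly: $dS/dt=\sum_{k<\ell}(\hat{O}|[\mathbb{P}_k+\mathbb{P}_\ell,\mathcal{L}_{k\ell}]|\hat{O})$, and Cauchy--Schwarz over pairs $(k,\ell)$ gives at best
\begin{equation*}
\frac{dS}{dt}\;\leq\;c\left(\sum_{k<\ell}\|\hat{h}_{k\ell}\|^2\,(\hat{O}|\mathbb{P}_k+\mathbb{P}_\ell|\hat{O})\right)^{\!1/2}\left(\sum_{k<\ell}(\hat{O}|\mathbb{P}_k\mathbb{P}_\ell|\hat{O})\right)^{\!1/2}\;\leq\;c\,\sigma(\alpha,N)\sqrt{N}\,S,
\end{equation*}
i.e.\ exponential growth of $S$ with an $O(1)$ rate for $\alpha<1/2$, which only yields $t_*\gtrsim\log N$ --- far weaker than the claim. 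The sharp $\sqrt{S}$ factor in Ref.~\cite{yin2020bound} is not obtained from the mean size at all: it requires resolving the operator into size sectors, bounding the hopping amplitude of the resulting quantum walk between sectors $s$ and $s+1$ by $\sim\sigma\sqrt{sN}$ (using orthogonality of strings whose supports differ in the \emph{added} site), and then running a Lieb--Robinson-type argument on the size lattice. That machinery is the entire content of the theorem in that reference and is not ``routine bookkeeping.'' So as written your proof has a hole at its central step; either import the size-distribution argument of Ref.~\cite{yin2020bound} in full, or switch to the paper's Duhamel-plus-crude-projector route, which reaches the same exponents with far less effort.
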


\begin{proof}
We provide the proof, following Ref.~\cite{yin2020bound}, by bounding the average operator size $S(\hat{O}_i(t))$.
Without loss of generality, we assume $(\hat{O}_i|\hat{O}_i)=1$.
By unitary, this also holds at time $t$: $(\hat{O}_i(t)|\hat{O}_i(t))=1$. Thus, we have
\begin{align}
    S(\hat{O}_i(t))=&\sum_{j\in \Lambda} (\hat{O}_i(t)|\mathbb{P}_j|\hat{O}_i(t)) \nonumber\\
    \leq& 1 + \sum_{j (\neq i)}^N (\hat{O}_i(t)|\mathbb{P}_j|\hat{O}_i(t)),
    \label{eq:proof1}
\end{align}
where we have used $(\hat{O}_i(t)|\mathbb{P}_i|\hat{O}_i(t))\leq 1$.

To bound the second term, we define the projection
\begin{equation}
    \mathbb{P}_{i^c}=1 - \prod_{j(\neq i)} (1-\mathbb{P}_j),
\end{equation}
which projects onto operators with support outside site~$i$.
Using the inequality of arithmetic and geometric means, we obtain
\begin{align}
    1-\mathbb{P}_{i^c} \leq & \left[\frac{1}{N-1} \sum_{j(\neq i)}^N (1-\mathbb{P}_{j}) \right]^{N-1} \nonumber\\
    \Leftrightarrow  1-\mathbb{P}_{i^c} \leq & \frac{1}{N-1} \sum_{j(\neq i)}^N (1-\mathbb{P}_{j}) \nonumber \\
    \Leftrightarrow  \sum_{j(\neq i)}^N \mathbb{P}_{j} \leq & (N-1) \mathbb{P}_{i^c}.
    \label{eq:approximation}
\end{align}
This inequality approximates operators with support outside site~$i$ by operators with operator size of $N-1$.
Then,
\begin{equation}
    S(\hat{O}_i(t)) \leq 1 +(N-1) (\hat{O}_i(t)|\mathbb{P}_{i^c}|\hat{O}_i(t)).
    \label{eq:proof2}
\end{equation}

Next, we evaluate $(\hat{O}_i(t)|\mathbb{P}_{i^c}|\hat{O}_i(t)) = \|\mathbb{P}_{i^c}|\hat{O}_i(t))\|_2^2$.
Let us define the Hamiltonian as $\hat{H}=\hat{H}_i + \hat{H}_\mathrm{int} + \hat{H}_{i^c}$, with
\begin{equation}
\hat{H}_i= \bm{h} \hat{\bm{\sigma}}_i^\intercal, \quad \hat{H}_\mathrm{int}= \sum_{j(\neq i)} \hat{\bm{\sigma}}_i \bm{J}_{ij} \hat{\bm{\sigma}}_j^\intercal,
\end{equation}
and define the associated Liouvillians  $\mathcal{L}$, $\mathcal{L}_i$, $\mathcal{L}_\mathrm{int}$, and $\mathcal{L}_{i^c}$ (e.g., $\mathcal{L}|\hat{O})=i|[\hat{H},\hat{O}])$).
By the Duhamel identity,
\begin{equation}
    e^{\mathcal{L}t} = e^{\mathcal{L}_i t} + \int_0^t ds e^{\mathcal{L}(t-s)} (\mathcal{L} - \mathcal{L}_i) e^{\mathcal{L}_i s}.
\end{equation}
Thus, 
\begin{align}
    |\hat{O}_i(t))=&e^{\mathcal{L}_i t} |\hat{O}_i) + \int_0^t ds e^{\mathcal{L}(t-s)} (\mathcal{L}_\mathrm{int} + \mathcal{L}_{i^c}) e^{\mathcal{L}_i s} |\hat{O}_i) \nonumber\\
    =& |\hat{O}_i^{(0)}(t)) + \int_0^t ds e^{\mathcal{L}(t-s)} \mathcal{L}_\mathrm{int} |\hat{O}_i^{(0)}(s)),
\end{align}
where $|\hat{O}_i^{(0)}(t))=e^{\mathcal{L}_i t} |\hat{O}_i)$.
Since $\mathbb{P}_{i^c}|\hat{O}_i^{(0)}(t)) = 0$, we have
\begin{equation}
    \mathbb{P}_{i^c}|\hat{O}_i(t))=\int_0^t ds \mathbb{P}_{i^c} e^{\mathcal{L}(t-s)} \mathcal{L}_\mathrm{int} |\hat{O}_i^{(0)}(s)).
\end{equation}
Then, we use the triangle inequality to obtain
\begin{align}
    \| \mathbb{P}_{i^c}|\hat{O}_i(t)) \|_2 \leq & \int_0^t ds \| \mathbb{P}_{i^c} e^{\mathcal{L}(t-s)} \mathcal{L}_\mathrm{int} |\hat{O}_i^{(0)}(s)) \|_2 \nonumber\\
    \leq & \int_0^t \| \mathcal{L}_\mathrm{int} |\hat{O}_i^{(0)}(s)) \|_2 \leq 2 t \| \hat{H}_\mathrm{int}  \|_2,
\end{align}
where we have used $\| \mathbb{P}_{i^c} |\hat{O})\|_2\leq \||\hat{O})\|_2$ and $\|e^{\mathcal{L}t} |\hat{O})\|_2 = \||\hat{O})\|_2$.
Thus, 
\begin{equation}
   (\hat{O}_i(t)|\mathbb{P}_{i^c}|\hat{O}_i(t)) \leq 4t^2 \| \hat{H}_\mathrm{int}  \|_2^2.
   \label{eq:proof3}
\end{equation}

Finally, we estimate the interaction strength.
Using the orthogonality $(\hat{\sigma}_i^a \hat{\sigma}_j^b | \hat{\sigma}_i^c \hat{\sigma}_k^d)=\delta_{jk} \delta_{ac} \delta_{bd}$ and Eq.~(\ref{eq:coupling}), we obtain for large $N$
\begin{align}
    \| \hat{H}_\mathrm{int}  \|_2^2 =& \sum_{j(\neq i)}^N \sum_{a,b \in \{x,y,z\}} (J_{ij}^{ab})^2\nonumber\\
    \sim&\left\{
    \begin{aligned}
        &N^{-1} \text{ for } 0\leq \alpha < \frac{1}{2}\\
        &N^{-1} \log N \text{ for }\alpha=\frac{1}{2}\\
        &N^{-2(1-\alpha)} \text{ for } \frac{1}{2}<\alpha \leq 1.
    \end{aligned}
    \right.
    \label{eq:proof4}
\end{align}

Combining Eqs.~(\ref{eq:proof2}),~(\ref{eq:proof3}), and~(\ref{eq:proof4}) estimates
\begin{align}
    S(O_i(t)) \leq& 1+4(N-1)t^2 \| \hat{H}_\mathrm{int}  \|_2^2 \nonumber\\
    \simeq &\left\{
    \begin{aligned}
        &t^2 \text{ for } 0\leq \alpha < \frac{1}{2}\\
        &(\log N) t^2 \text{ for }\alpha=\frac{1}{2}\\
        &N^{-1+2\alpha} t^2 \text{ for } \frac{1}{2}<\alpha \leq 1,
    \end{aligned}
    \right.
    \label{eq:bound}
\end{align}
which leads to the lower bound on $t_*$ in Eq.~(\ref{eq:theorem}).
\end{proof}

\subsection{DTWA result}\label{secB:result}
\begin{figure}[t]
    \centering
    (a)\\
    \includegraphics[width=0.75\linewidth]{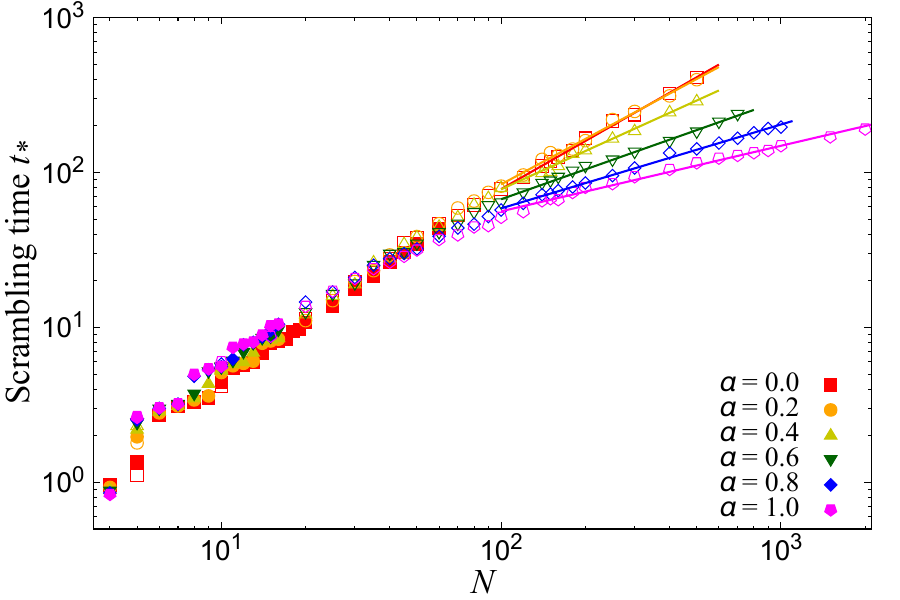}\\
    (b)\\
    \includegraphics[width=0.75\linewidth]{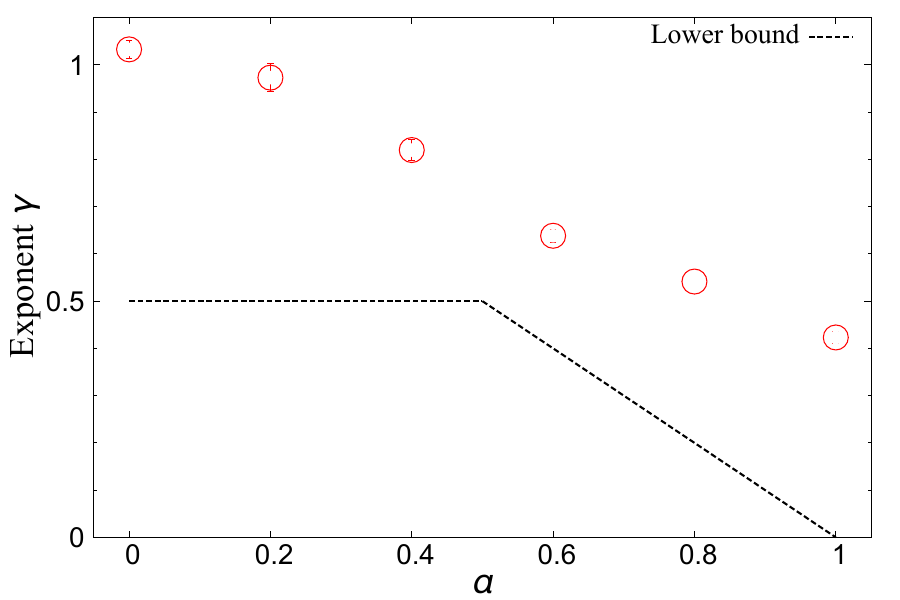}\\
    \caption{
    (a) System-size dependence of the scrambling time $t_*$ for different values of $\alpha$ in the strongly long-range interacting regime.
    Open symbols and filled symbols indicate the DTWA results and the exact results, respectively.
    The bold lines are fitting using data of $N\geq 100$.
    (b) $\alpha$-Dependence of the exponent $\gamma$ ($t_* \sim N^\gamma$). The error bars represent the standard deviation.
    }
    \label{fig:scrambling}
\end{figure}

We apply the DTWA method to investigate the system-size dependence of the scrambling time $t_*$ in the strongly long-range interacting systems (i.e., $0\leq \alpha \leq 1$).
To estimate $t_*$, we compute $t_i$ and $t_f$, where $t_i (t_f)$ is the time when $C^{(1)xz}_{1\lceil (N+1)/3 \rceil}=\sqrt{2}/4$ (i.e., $F^{(2)xz}_{1\lceil (N+1)/3 \rceil}=3/4$) is satisfied for the first (last) time.
The scrambling time is then given as the average: $t_*=(t_i+t_f)/2$.

Figure~\ref{fig:scrambling}~(a) shows the $N$-dependence of $t_*$ for various values of $\alpha$.
The DTWA and exact results are indicated by open and filled symbols, respectively.
For small system sizes, both results nearly overlap, demonstrating the validity of the DTWA method.
Unlike exact methods, DTWA is applicable to larger systems.
For large $N$, we reveal that the scaling behavior of $t_*$ with $N$ qualitatively changes depending on the value of $\alpha$.
It is noted that this dependence does not appear in small system sizes (it seems that results do not strongly depend on $\alpha$ for $N\lesssim 50$).
This fact implies that finite-size effects are strong in long-range interacting systems, and thus the DTWA is a powerful tool to study them.
Figure~\ref{fig:scrambling}~(b) displays the $\alpha$-dependence of the exponent $\gamma$, where $t_* \sim N^\gamma$.
The values of $\gamma$ are obtained by fitting the data for $N \geq 100$ in Figure~\ref{fig:scrambling}~(a).
The dashed line represents the lower bound for $\gamma$ given in Sec.~\ref{secB:theory}.
We find that the numerically estimated exponents exceed this lower bound.

To examine the tightness of the bound, we analyze the operator size of $\hat{\bm{\sigma}}_i(t)$ at short timescales, following Ref.~\cite{colmenarez2020lieb}.
A Taylor expansion yields
\begin{equation}
    \hat{\bm{\sigma}}_i(t)\approx\hat{\bm{\sigma}}_i-2t \hat{\bm{\sigma}}_i \times \left(\bm{h}+\sum_{j(\neq i)} \bm{J}_{ij} \hat{\bm{\sigma}}_j^\intercal \right),
\end{equation}
which approximates the operator size as 
\begin{equation}
    S(\hat{\sigma}_i^a(t)) \approx 1+4t^2 \|\hat{H}_\mathrm{int}^{\bar{a}}\|_2^2 \leq 1 + 4t^2 \|\hat{H}_\mathrm{int}\|_2^2,
\end{equation}
where $\hat{H}_\mathrm{int}^{\bar{a}}=\sum_{j(\neq i)} \sum_{b(\neq a)} \sum_c J_{ij}^{bc} \hat{\sigma}_i^b \hat{\sigma}_j^c$.
This indicates that the bound in Eq.~(\ref{eq:bound}) overestimates the operator size.
The overestimation arises from the inequality in  Eq.~(\ref{eq:approximation}), as shown by
\begin{align}
    &\sum_{j(\neq i)}(\hat{\sigma}_i^a(t)|\mathbb{P}_j |\hat{\sigma}_i^a(t)) \approx 4t^2 \|\hat{H}_\mathrm{int}^{\bar{a}}\|_2^2, \nonumber\\
    &(N-1)(\hat{\sigma}_i^a(t)|\mathbb{P}_{i^c} |\hat{\sigma}_i^a(t)) \approx 4(N-1)t^2 \|\hat{H}_\mathrm{int}^{\bar{a}}\|_2^2.
\end{align}
This difference occurs because local terms in $\hat{\sigma}_i^a(t)$, such as $\hat{\sigma}_i^x \hat{\sigma}_j^x$ for $j\neq i$, are approximated as operators with support over $N-1$ sites.

The DTWA results and the short-timescale analysis imply that the operator size would be overestimated at long timescales as well due to the inequality in Eq.~(\ref{eq:approximation}), and indicates that tighter bounds might exist.
Bridging the gap between numerical and analytical results remains an open problem for future work.

\section{Conclusion}\label{sec:conclusion}
This paper proposes a DTWA-based method for computing the OTOC.
The method is applied to analyze the time evolution of autocorrelation functions, squared commutators, and their higher-order extensions under Hamilton dynamics in long-range interacting systems.
By comparing the DTWA method with exact computations, we demonstrate that the DTWA accurately captures the average spreading of quantum information (i.e., squared commutators) across all time regimes in the strongly long-range interacting systems (Figs.~\ref{fig:otoc}~(a) and~(b)).
However, we also identify the limitations of the DTWA method in studying weakly long-range interacting systems and the fastest spreading of quantum information.
Next, we investigate the system-size dependence of the scrambling time, described by $t_*\sim N^\gamma$ (Figs.~\ref{fig:scrambling}~(a) and~(b)).
In order to numerically identify $\gamma$, data from large system sizes, which are not accessible by exact numerical computations, are needed.
It demonstrates strength of the DTWA method in studying long-range interacting systems.
Our results on $\gamma$ reveal that the numerically estimated value exceeds a theoretical lower bound.
It remains open to fill the gap between numerics and analytical results by deriving a tighter bound, if exists.
Additionally, extending the DTWA method to finite temperature and disordered systems will be an important direction for future work.

\begin{acknowledgments}
This work was supported by JSPS KAKENHI (Grant Numbers JP21H05185 and 23K13034) and by JST, PRESTO Grant No. JPMJPR2259.
The numerical calculations were partly supported by the supercomputer center of ISSP of Tokyo University.
\end{acknowledgments}


\appendix
\twocolumngrid

\section{DTWA expression for OTOC}~\label{appendix:DTWA}
We give a DTWA expression for the $n$-th order OTOC $F_{ij}^{(n)ab}(t)$.
In the discrete phase space, the $n$-th order OTOC is expressed as
\begin{equation}
    F_{ij}^{(n)ab}(t)= \sum_{\bm{\tau}} W_{\bm{\tau}} \mathrm{Tr} [(\hat{\sigma}_i^a(t) \hat{\sigma}_j^b)^n \hat{A}(\bm{s}_{\bm{\tau}})].
    \label{eq:OTOC0}
\end{equation}
Let us introduce
\begin{equation}
    G_{ij}^{(n)ab}(\bm{s}_{m}^{(0)}(\bm{\tau}))=\mathrm{Tr} [(\hat{\sigma}_i^a(t) \hat{\sigma}_j^b)^{n-1} \hat{\sigma}_i^a(t) \hat{A}(\bm{s}_{m}^{(0)}(\bm{\tau}))].
\end{equation}
Then the real part of the trace in $F_{ij}^{(n)ab}(t)$ is given as
\begin{align}
    & \mathrm{Re} \mathrm{Tr} [(\hat{\sigma}_i^a(t) \hat{\sigma}_j^b)^n \hat{A}(\bm{s}_{\bm{\tau}})] \nonumber\\
    =&\frac{1}{2} \mathrm{Tr} [(\hat{\sigma}_i^a(t) \hat{\sigma}_j^b)^{n-1} \hat{\sigma}_i^a(t) \{ \hat{\sigma}_j^b, \hat{A}(\bm{s}_{\bm{\tau}}) \}] \nonumber\\
    =&s_{\tau_j}^b \mathrm{Tr} [(\hat{\sigma}_i^a(t) \hat{\sigma}_j^b)^{n-1} \hat{\sigma}_i^a(t) \hat{A}(\bm{s}_{1}^{(0)}(\bm{\tau}))] \nonumber\\
    =&s_{\tau_j}^b G_{ij}^{(n)ab}(\bm{s}_{1}^{(0)}(\bm{\tau})),
    \label{eq:OTOC_real}
    \end{align}
where $\{\cdot, \cdot\}$ is the anticommutator,
and the imaginary part of the trace in $F_{ij}^{(n)ab}(t)$ is expressed as
\begin{align}
    & \mathrm{Im} \mathrm{Tr} [(\hat{\sigma}_i^a(t) \hat{\sigma}_j^b)^n \hat{A}(\bm{s}_{\bm{\tau}})] \nonumber\\
    =&\frac{-\mathrm{i}}{2} \mathrm{Tr} [(\hat{\sigma}_i^a(t) \hat{\sigma}_j^b)^{n-1} \hat{\sigma}_i^a(t) [ \hat{\sigma}_j^b, \hat{A}(\bm{s}_{\bm{\tau}}) ]] \nonumber\\
    =&\mathrm{Tr} [(\hat{\sigma}_i^a(t) \hat{\sigma}_j^b)^{n-1} \hat{\sigma}_i^a(t) (\hat{A}(\bm{s}_{2}^{(0)}(\bm{\tau})) - \hat{A}(\bm{s}_{3}^{(0)}(\bm{\tau})))] \nonumber\\
    =& G_{ij}^{(n)ab}(\bm{s}_{2}^{(0)}(\bm{\tau}))- G_{ij}^{(n)ab}(\bm{s}_{3}^{(0)}(\bm{\tau})).
    \label{eq:OTOC_imaginary}
\end{align}
For $n\geq 3$, we can show
\begin{align}
    &G_{ij}^{(n)ab}(\bm{s}_{m}^{(0)}(\bm{\tau}))\nonumber\\
    =& \mathrm{Tr} [\hat{\sigma}_j^b (\hat{\sigma}_i^a(t)\hat{\sigma}_j^b)^{n-2} \hat{\sigma}_i^a(t) \hat{A}(\bm{s}_{m}^{(0)}(\bm{\tau})) \hat{\sigma}_i^a(t)] \nonumber\\
    \approx&\mathrm{Tr} [\hat{\sigma}_j^b (\hat{\sigma}_i^a(t)\hat{\sigma}_j^b)^{n-2} e^{\mathrm{i}\hat{H}t}\hat{\sigma}_i^a \hat{A}(\bm{s}_{m}^{(0)}(t;\bm{\tau})) \hat{\sigma}_i^a e^{-\mathrm{i}\hat{H}t} ]\nonumber\\
    =&\mathrm{Tr} [\hat{\sigma}_j^b (\hat{\sigma}_i^a(t)\hat{\sigma}_j^b)^{n-2} e^{\mathrm{i}\hat{H}t} \hat{A}(\bm{s}_{m}^{(1)}(\bm{\tau})) e^{-\mathrm{i}\hat{H}t} ]\nonumber\\
    \approx&\mathrm{Tr} [ (\hat{\sigma}_i^a(t)\hat{\sigma}_j^b)^{n-3}  \hat{\sigma}_i^a(t) \hat{\sigma}_j^b\hat{A}(\bm{s}_{m}^{(1)} (-t;\bm{\tau})) \hat{\sigma}_j^b ] \nonumber\\
    =&\mathrm{Tr} [ (\hat{\sigma}_i^a(t)\hat{\sigma}_j^b)^{n-3}  \hat{\sigma}_i^a(t) \hat{A}(\bm{s}_{m}^{(2)}(\bm{\tau})) ] \nonumber\\
    =& G_{ij}^{(n-2)ab}(\bm{s}_{m}^{(2)}(\bm{\tau})).
\end{align}
By iteratively using this relation, we obtain for odd $n$
\begin{align}
        G_{ij}^{(n)ab}(\bm{s}_{m}^{(0)}(\bm{\tau})) =& G_{ij}^{(1)ab}(\bm{s}_{m}^{(n-1)}(\bm{\tau})) \nonumber\\
        \approx& \mathrm{Tr} [\hat{\sigma}_i^a \hat{A}(\bm{s}_{m}^{(n-1)}(t;\bm{\tau}))] \nonumber\\
        =& s_{i,m}^{(n-1)a}(t;\bm{\tau}) = s_{i,m}^{(n)a}(\bm{\tau}),
        \label{eq:G_odd}
\end{align}
and for even $n$
\begin{align}
        G_{ij}^{(n)ab}(\bm{s}_{m}^{(0)}(\bm{\tau})) =& G_{ij}^{(2)ab}(\bm{s}_{m}^{(n-2)}(\bm{\tau})) \nonumber\\
        \approx& \mathrm{Tr} [\hat{\sigma}_j^b e^{\mathrm{i}\hat{H}t}\hat{\sigma}_i^a \hat{A}(\bm{s}_{m}^{(n-2)}(t;\bm{\tau})) \hat{\sigma}_i^a e^{-\mathrm{i}\hat{H}t}] \nonumber\\
        =& \mathrm{Tr} [\hat{\sigma}_j^b e^{\mathrm{i}\hat{H}t} \hat{A}(\bm{s}_{m}^{(n-1)}(\bm{\tau})) e^{-\mathrm{i}\hat{H}t}] \nonumber\\
        \approx& \mathrm{Tr} [\hat{\sigma}_j^b \hat{A}(\bm{s}_{m}^{(n-1)} (-t;\bm{\tau}))] \nonumber\\
        =& s_{j,m}^{(n-1)b}(-t;\bm{\tau}) = s_{j,m}^{(n)b}(\bm{\tau}).
        \label{eq:G_even}
\end{align}
Combining Eqs.~(\ref{eq:OTOC0}),~(\ref{eq:OTOC_real}),~(\ref{eq:OTOC_imaginary}),~(\ref{eq:G_odd}), and~(\ref{eq:G_even}) obtains the DTWA expression for OTOCs in Eqs.~(\ref{eq:odd_OTOC}) and~(\ref{eq:even_OTOC}).

\section{Exact simulation method at $\alpha=0$}\label{appendix:permutation}
Here we explain the exact simulation method of the autocorrelation function $F_{ii}^{(1)aa}(t)$ with $\rho=\rho_\downarrow$ and $\rho=\rho_0$, and the OTOC $F_{ij}^{(2)ab}(t)$ with $\rho=\rho_0$.
The permutation symmetry of the Hamiltonian at $\alpha=0$ allows for the exact simulation even for large system sizes.
The OTOC was computed in~\cite{qi2023surprises}, but providing a detailed explanation here would be beneficial.
Following the argument in~\cite{gegg2016efficient}, we introduce a basis element as
\begin{equation}
    \ket{n_{11}, \bm{u}_{11}, n_{10}, \bm{u}_{10}} \bra{n_{11}, \bm{u}_{11}, n_{01}, \bm{u}_{01}},
\end{equation}
where $n_{ij}$ represents the number of spins that are represented by $\ket{i}\bra{j}$, where $\ket{1}$ and $\ket{0}$ are eigenvectors of $\hat{\sigma}^z$ with the eigenvalue of $1$ and $-1$, respectively, and $\bm{u}_{ij}$ is the set of spin labels.
For example, the basis denoted by $\ket{1,\{3\}, 0, \varnothing} \bra{1,\{3\},2,\{1,4\}}$ with $N=4$ represents a state, where spins~$1$ and~$4$ are in $\ket{0}\bra{1}$ state, spin~$3$ is in $\ket{1}\bra{1}$ state, and spin~$2$ is in $\ket{0}\bra{0}$ state.
Here, $\varnothing$ denotes the empty set.
Since $\bm{u}_{ij}$ is irrelevant within the subspace of the permutation symmetry, we define a basis $\hat{\Sigma}_{\vec{n}}$ as
\begin{equation}
    C_{\vec{n}}\hat{\Sigma}_{\vec{n}} = \sum_{\text{all comb.}} \ket{n_{11}, \bm{u}_{11}, n_{10}, \bm{u}_{10}} \bra{n_{11}, \bm{u}_{11}, n_{01}, \bm{u}_{01}},
    \label{eq:basis_all}
\end{equation}
where the sum runs over all possible sets of $\bm{u}_{11}$, $\bm{u}_{10}$, and $\bm{u}_{01}$, and $\vec{n}=(n_1,n_2,n_3)$ represents $n_1=n_{11}+n_{10}$, $n_2=n_{11}$, and $n_3=n_{01}$.
The normalization is given by
\begin{equation}
    C_{\vec{n}}=\frac{N!}{(n_1-n_2)!n_2!n_3!(N-n_1-n_3)!}.
\end{equation}
Here, we adopt the notation, where $n_1$, $n_2$, and $n_3$ corresponds to $n$, $m$, and $m'$ in Ref.~\cite{sarkar1987optical}, respectively.
Then, the operators that belong to the permutation-symmetry subspace can be expressed as
\begin{equation}
    \hat{A} = \sum_{n_1=0}^N \sum_{n_2=0}^{n_1} \sum_{n_3=0}^{N-n_1} A_{\vec{n}} \hat{\Sigma}_{\vec{n}},
\end{equation}
where $A_{\vec{n}} \in \mathbb{C}$ is a coefficient.
For example, the identity operator and all-spin-down state is expressed as $\hat{I}=\sum_{n_1=0}^N C_{(n_1,n_1,0)} \hat{\Sigma}_{(n_1,n_1,0)}$ and $\rho_\downarrow = \hat{\Sigma}_{(0,0,0)}$, respectively.
Eq.~(\ref{eq:basis_all}) gives the inner product of $\hat{\Sigma}_{\vec{n}}$ and $\hat{\Sigma}_{\vec{n}'}$ as
\begin{equation}
    \mathrm{Tr} [\hat{\Sigma}_{\vec{n}} \hat{\Sigma}_{\vec{n}'}] = \frac{1}{C_{\vec{n}}} \delta_{n'_1,n_2+n_3}\delta_{n'_2,n_2}\delta_{n'_3,n_1-n_2}.
    \label{eq:inner}
\end{equation}

To consider the time evolution of a spin operator $\hat{\bm{\sigma}}_i(t)=e^{\mathrm{i}\hat{H}t} \hat{\bm{\sigma}}_i e^{-\mathrm{i}\hat{H}t}$, we introduce a basis, where $\hat{\Sigma}_{\vec{n}}$ is sandwiched by spin operators, and expand $\hat{\bm{\sigma}}_i(t)$ using this basis as
\begin{equation}
    \hat{\bm{\sigma}}_i(t) = \sum_{p,q \in \{1,x,y,z\}} \sum_{\vec{n}} {\bm{\sigma}}^{(p,q)}_{\vec{n}}(t) \hat{\sigma}_i^p \hat{\Sigma}_{\vec{n}} \hat{\sigma}_i^q,
    \label{eq:single_spin}
\end{equation}
where $\hat{\sigma}_i^1$ is the identity operator and ${\bm{\sigma}}^{(p,q)}_{\vec{n}}(t) \in \mathbb{C}$ is a coefficient.
The Heisenberg equation $d\hat{\bm{\sigma}}_i(t)/dt = \mathrm{i}[\hat{H}, \hat{\bm{\sigma}}_i(t)]$ gives a set of differential equations for each element ${\bm{\sigma}}^{p,q}_{\vec{n}}(t)$.
The number of equations is the order of $N^3$.

There is a relation between the basis $\hat{\Sigma}_{\vec{n}}$ and $\sum_{i\in \Lambda}\hat{\sigma}_i^p \hat{\Sigma}_{\vec{n}} \hat{\sigma}_i^q$:
\begin{equation}
    \sum_{i \in \Lambda} \hat{\sigma}_i^p \hat{\Sigma}_{\vec{n}} \hat{\sigma}_i^q = \sum_{\vec{n}'} f_{\vec{n},\vec{n}'}^{p,q} \hat{\Sigma}_{\vec{n}'},
    \label{eq:sum_i}
\end{equation}
where $f_{\vec{n},\vec{n}'}^{p,q} \in \mathbb{C}$ is a coefficient.
For example, $f_{\vec{n},\vec{n}'}^{z,z}=[N-2(n_1-n_2+n_3)]\delta_{\vec{n},\vec{n}'}$.
Then, we can compute the following quantity using Eq.~(\ref{eq:inner}) and Eq.~(\ref{eq:sum_i}):
\begin{equation}
    g_{\vec{n},\vec{n}'}^{\vec{p},\vec{q}}=\sum_{i \in \Lambda} \mathrm{Tr} [(\hat{\sigma}_i^{p_1}\ldots \hat{\sigma}_i^{p_m}) \hat{\Sigma}_{\vec{n}}(\hat{\sigma}_i^{q_1}\ldots \hat{\sigma}_i^{q_m}) \hat{\Sigma}_{\vec{n}'}].
    \label{eq:g}
\end{equation}
Here, $\vec{p}=(p_1,\ldots,p_m)$ and $\vec{q}=(q_1,\ldots,q_m)$, where $m$ represents the length of the vectors, and $p_i, q_i \in \{1,x,y,z\}$.

First we consider an antocorrelation function: $F_{ii}^{(1)aa}(t)=\mathrm{Tr}[\hat{\sigma}_i^a (t) \hat{\sigma}_i^a \rho]$.
For $\rho=\rho_\downarrow$,
\begin{align}
    &F_{ii}^{(1)aa}(t) = \frac{1}{N} \sum_{k\in\Lambda} F_{kk}^{(1)aa}(t) \nonumber\\
    =&\frac{1}{N} \sum_{k\in\Lambda} \mathrm{Tr}[\hat{\sigma}_k^a (t) \hat{\sigma}_k^a \rho_\downarrow] \nonumber\\
    =&\frac{1}{N}\sum_{p,q} \sum_{\vec{n}} \sum_{k \in \Lambda} {\sigma}^{a(p,q)}_{\vec{n}}(t) \mathrm{Tr} [(\hat{\sigma}_k^p \hat{\Sigma}_{\vec{n}} \hat{\sigma}_k^q) (\hat{\sigma}_k^a \hat{\Sigma}_{\vec{0}})] \nonumber\\
    =&\frac{1}{N}\sum_{p,q} \sum_{\vec{n}} {\sigma}^{a(p,q)}_{\vec{n}}(t) g_{\vec{n},\vec{0}}^{(p,1),(q,a)}.
\end{align}
In the first line, we use translational symmetry.
From the second to the third line, we apply Eq.~(\ref{eq:single_spin}) and $\rho_\downarrow=\hat{\Sigma}_{\vec{0}}$.
In the last line, we use the expression in Eq.~(\ref{eq:g}).
Similarly, we obtain $F_{ii}^{(1)aa}(t)$ with $\rho= \rho_0$ by noting that $\rho_0= 2^{-N}\sum_{n_1=0}^{N} C_{(n_1,n_1,0)} \hat{\Sigma}_{(n_1,n_1,0)}$.

Next we consider an OTOC at infinite temperature state: $F_{ij}^{(2)ab}(t)=\mathrm{Tr}[\hat{\sigma}_i^a (t) \hat{\sigma}_j^b \hat{\sigma}_i^a (t) \hat{\sigma}_j^b]/2^N$ for $i\neq j$.
The permutation symmetry implies that
\begin{equation}
    F_{ij}^{(2)ab}(t)=\frac{1}{N(N-1)}\left[\sum_{k,\ell \in \Lambda} F_{k\ell}^{(2)ab}(t) - \sum_{k \in \Lambda} F_{kk}^{(2)ab}(t)\right].
    \label{eq:OTOC}
\end{equation}
The first term is given as
\begin{align}
    \sum_{k,\ell \in \Lambda} F_{k\ell}^{(2)ab}(t)=&\frac{1}{2^N} \sum_{k,\ell \in \Lambda} \mathrm{Tr}[\hat{\sigma}_k^a (t) \hat{\sigma}_\ell^b \hat{\sigma}_k^a (t) \hat{\sigma}_\ell^b].
    \label{eq:OTOC_part1}
\end{align}
Substituting into Eq.~(\ref{eq:OTOC_part1}) the following relation
\begin{align}
    &\sum_{\ell \in \Lambda} \hat{\sigma}_\ell^b \hat{\sigma}_k^a (t) \hat{\sigma}_\ell^b = \sum_{\ell \in \Lambda} \sum_{p,q} \sum_{\vec{n}} \sigma_{\vec{n}}^{(p,q)}(t) \hat{\sigma}_\ell^b \hat{\sigma}_k^p \hat{\Sigma}_{\vec{n}} \hat{\sigma}_k^q \hat{\sigma}_\ell^b \nonumber\\
    =& \sum_{p,q} \sum_{\vec{n}} \Big[ \sum_{\vec{n}'} \sigma_{\vec{n}}^{a(p,q)}(t) f_{\vec{n},\vec{n}'}^{b,b} \hat{\sigma}_k^p \hat{\Sigma}_{\vec{n}'} \hat{\sigma}_k^q \nonumber\\
    &-2\mathrm{i} \epsilon_{bpq} (\sigma_{\vec{n}}^{a(q,b)}(t) \hat{\sigma}_k^p \hat{\Sigma}_{\vec{n}} + \sigma_{\vec{n}}^{a(b,p)}(t) \hat{\Sigma}_{\vec{n}} \hat{\sigma}_k^q) \Big] \nonumber\\
    =&\sum_{p,q} \sum_{\vec{n}} h_{\vec{n}}^{ab(p,q)}(t) \hat{\sigma}_k^p \hat{\Sigma}_{\vec{n}} \hat{\sigma}_k^q,
\end{align}
where $h_{\vec{n}}^{ab(p,q)}(t) \in \mathbb{C}$ is a coefficient and $\epsilon_{bpq}$ is the Levi-Civita symbol, yields
\begin{align}
    \sum_{k,\ell \in \Lambda} F_{k\ell}^{(2)ab}(t)=&\frac{1}{2^N} \sum_{p,q,r,s} \sum_{\vec{n},\vec{n}'} h_{\vec{n}}^{ab(p,q)}(t) g_{\vec{n},\vec{n}'}^{(s,p),(q,r)} \sigma_{\vec{n}'}^{(r,s)}(t).
    \label{eq:OTOC1}
\end{align}
Similarly, we obtain
\begin{align}
    &\sum_{k \in \Lambda} F_{kk}^{(2)ab}(t)=\frac{1}{2^N} \sum_{k \in \Lambda} \mathrm{Tr}[\hat{\sigma}_k^a (t) \hat{\sigma}_k^b \hat{\sigma}_k^a (t) \hat{\sigma}_k^b] \nonumber\\
    =& \frac{1}{2^N}\sum_{p,q,r,s} \sum_{\vec{n},\vec{n}'} \sigma_{\vec{n}}^{a(p,q)}(t) \sigma_{\vec{n}'}^{a(r,s)}(t) g_{\vec{n},\vec{n}'}^{(s,b,p),(q,b,r)}.
    \label{eq:OTOC2}
\end{align}
Thus, substituting Eqs.~(\ref{eq:OTOC1}) and~(\ref{eq:OTOC2}) into Eq.~(\ref{eq:OTOC}) provides the OTOC $F_{ij}^{(2)ab}(t)$ with $\rho=\rho_0$.\\

\section{System-size dependences of DTWA method}\label{appendix:size}
\begin{figure*}[t]
    \centering
    Autocorrelation function $F_{11}^{(1)xx}(t)$ with $\rho=\rho_\downarrow$\\
    \includegraphics[width=0.32\linewidth]{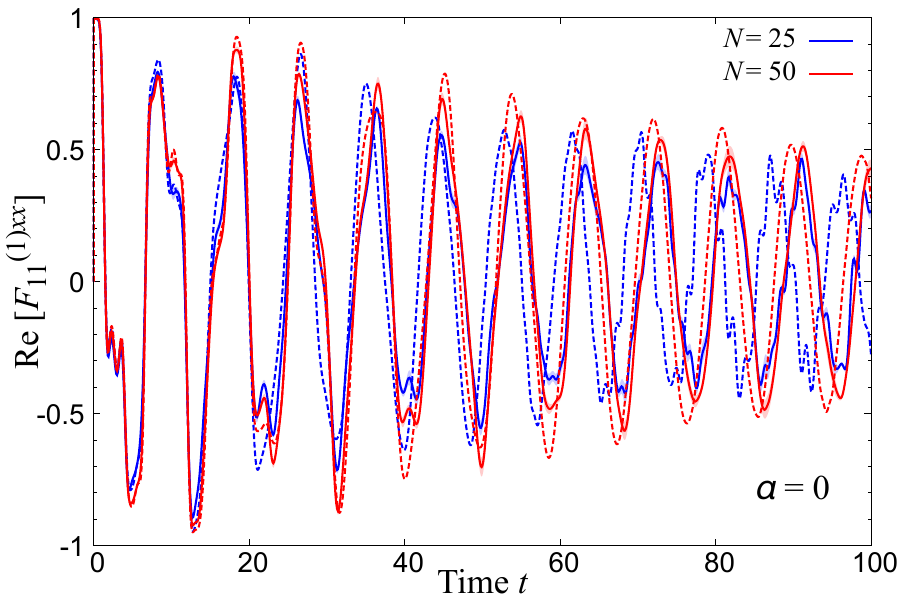}
    \includegraphics[width=0.32\linewidth]{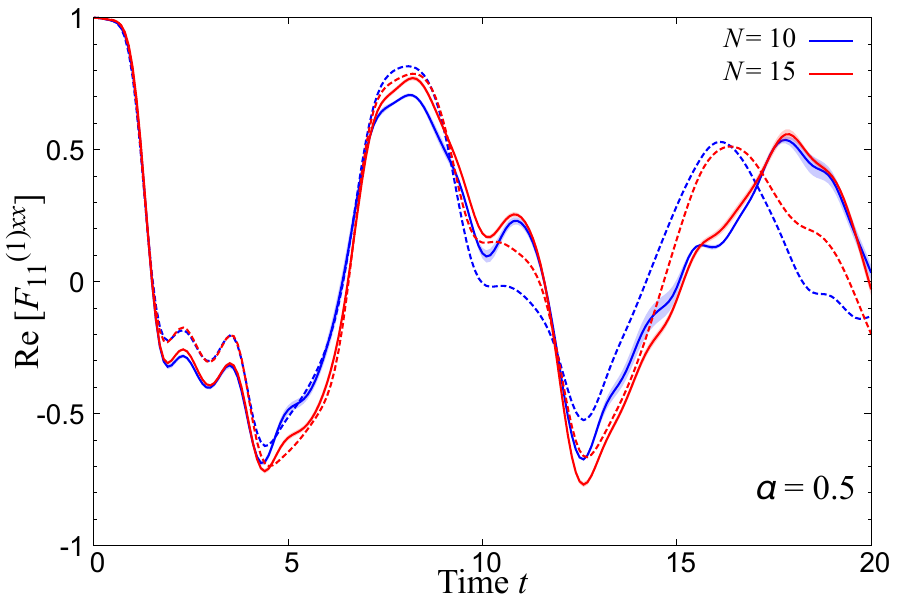}
    \includegraphics[width=0.32\linewidth]{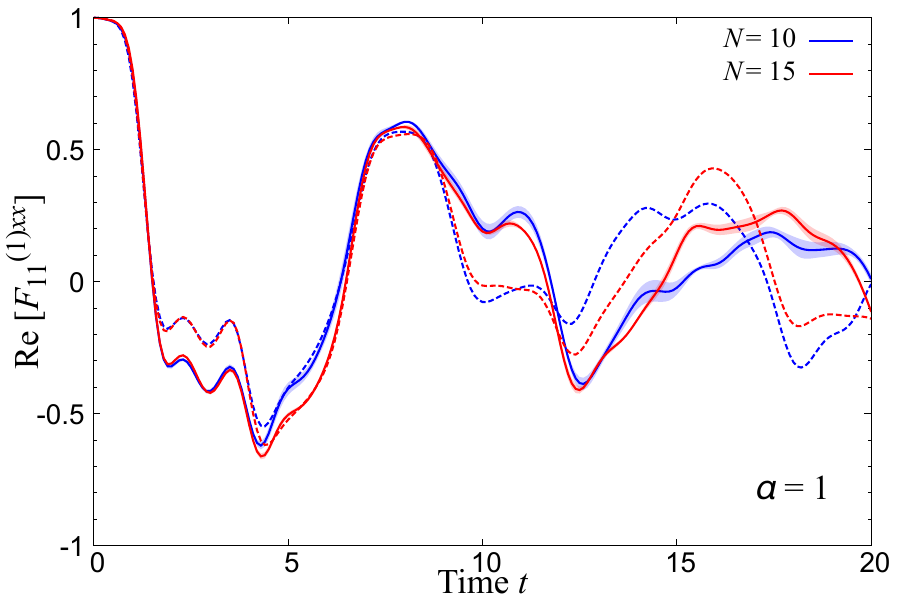}\\
    Autocorrelation function $F_{11}^{(1)xx}(t)$ with $\rho=\rho_0$\\
    \includegraphics[width=0.32\linewidth]{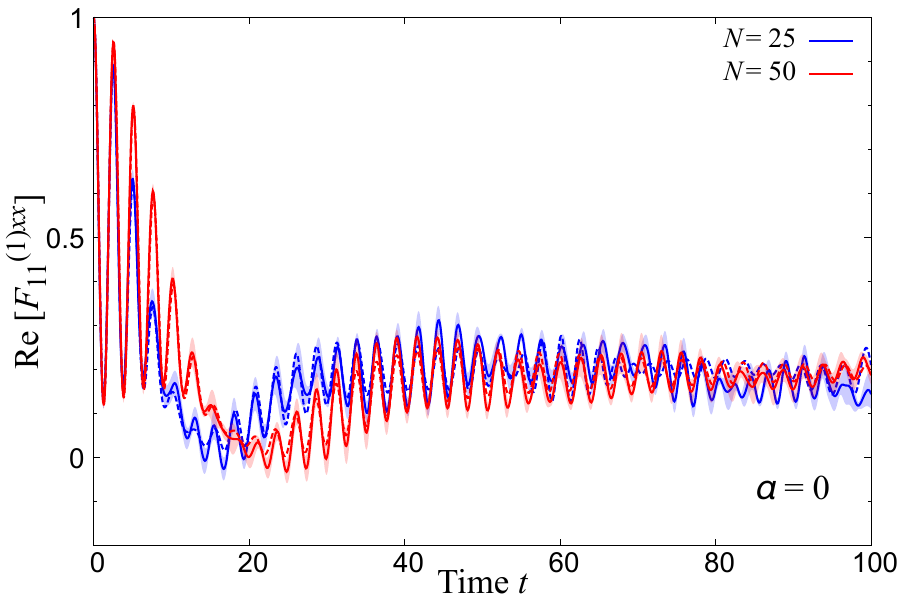}
    \includegraphics[width=0.32\linewidth]{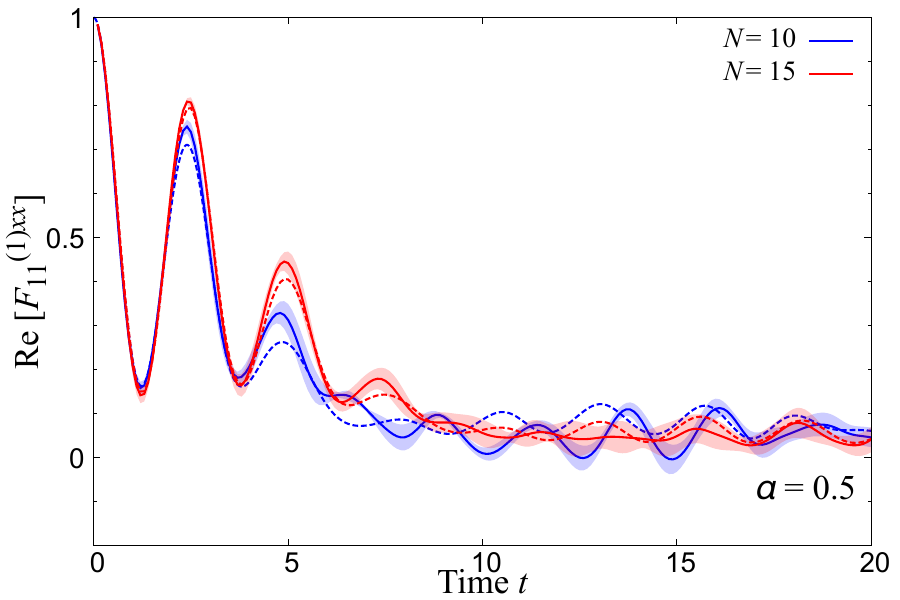}
    \includegraphics[width=0.32\linewidth]{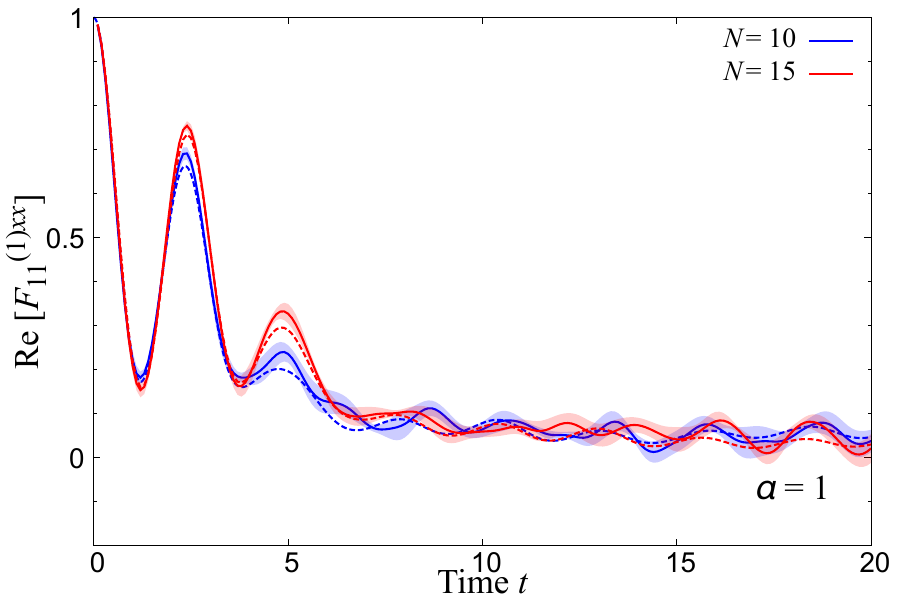}\\
    Frobenius norm $C_{12}^{(1)xz}(t)$\\
    \includegraphics[width=0.32\linewidth]{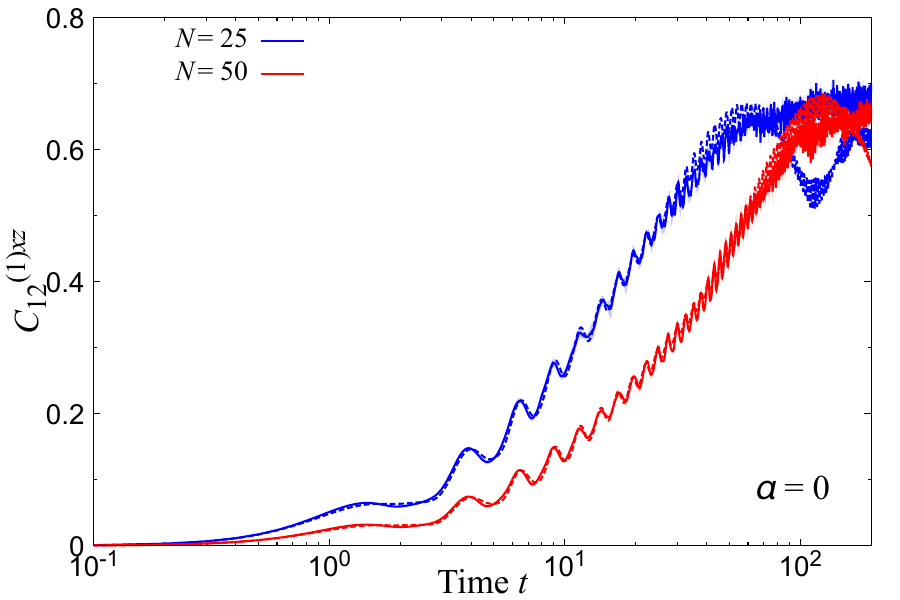}
    \includegraphics[width=0.32\linewidth]{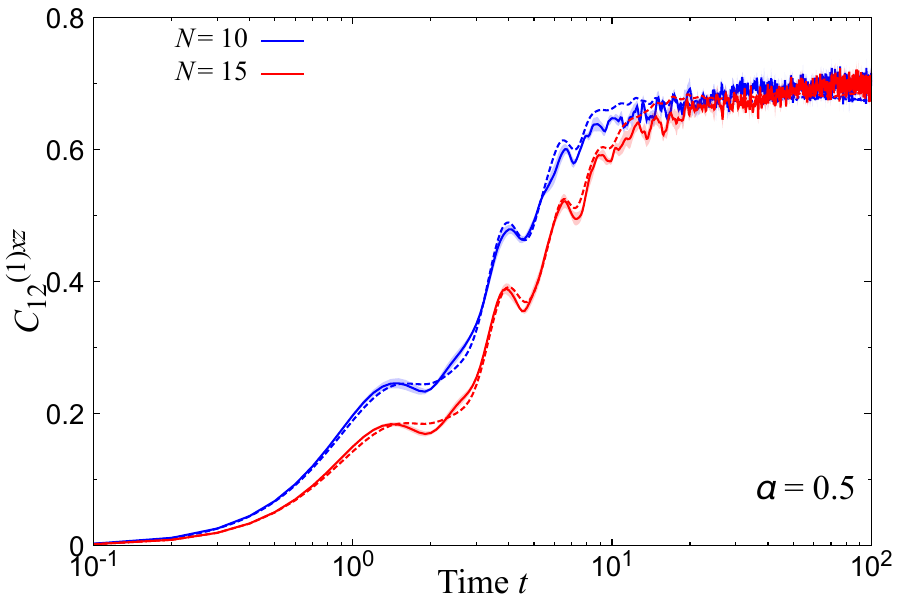}
    \includegraphics[width=0.32\linewidth]{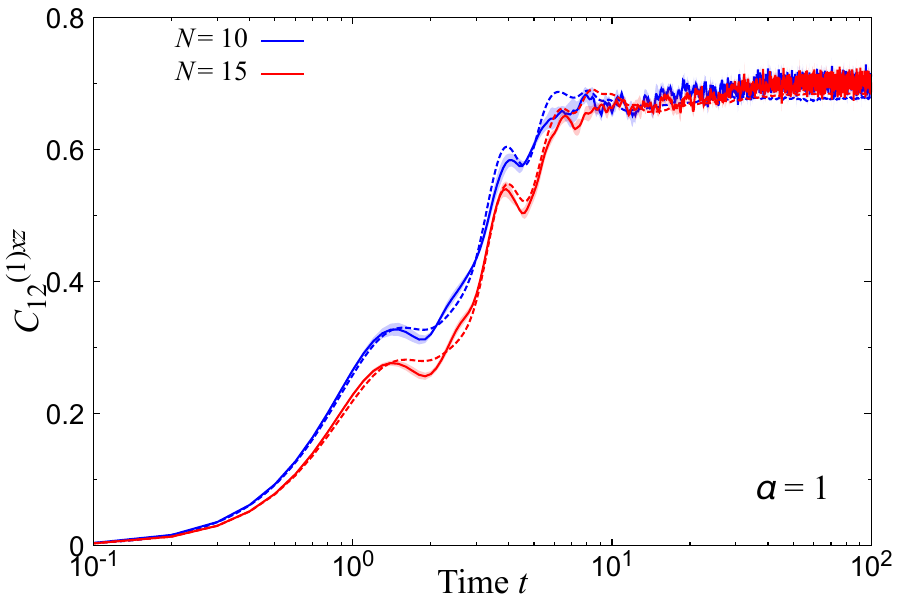}\\
    \caption{
    $F_{11}^{(1)xx}(t)$ with $\rho=\rho_\downarrow$ (upper panels), $F_{11}^{(1)xx}(t)$ with $\rho=\rho_0$ (middle panels), and $C_{12}^{(1)xz}(t)$  (lower panels) at various values of $\alpha \in \{0, 0.5, 1\}$.
    The red and blue lines represent the results for the larger and the smaller system sizes, respectively.
    The bold lines indicate the DTWA results, whereas the dotted lines correspond to the exact results.
    The shaded regions depict the standard deviation of the DTWA results.
    }
    \label{fig:system_size}
\end{figure*}

We present system-size dependences of the DTWA method in strongly long-range interacting systems at $\alpha \in \{0, 0.5, 1\}$.
The DTWA captures longer-time dynamics of $F_{11}^{(1)xx}(t)$ at $\alpha \in \{ 0, 0.5 \}$ with $\rho=\rho_\downarrow$ and $\rho_0$ as the system size increases.
On the other hand, at $\alpha=1$, where the interaction lies at the boundary, the DTWA reproduces the exact dynamics for almost the same duration across different system sizes.
For $C_{12}^{(1)xz}(t)$, the DTWA reproduces the exact dynamics even for small system sizes at all values of $\alpha$.
Additionally, at $\alpha=0$, we observe oscillations in $C_{12}^{(1)xz}(t)$ at late times, which are not captured by the DTWA method.

\bibliography{dtwa}


\end{document}